\newcounter{alphthm}
\newtheorem{defn}{Definition}
\newtheorem{prop}{Proposition}
\newtheorem{cor}{Corollary}
\newtheorem{rem}{Remark}
\newcommand{\be}{\begin{equation}}
\newcommand{\ee}{\end{equation}}
\newcommand{\ben}{\begin{enumerate}}
\newcommand{\een}{\end{enumerate}}
\renewcommand{\theequation}
{\arabic{equation}}
\def\beq{\begin{equation}}
\def\eeq{\end{equation}}
\renewcommand{\theequation}{\arabic{section}.\arabic{equation}}
\title{\Large\hspace{-2cm}  Multi-scale Invariant Fields: Estimation and Prediction}
\author{\large \hspace{-1.5cm}$\mathrm{H.\: Ghasemi}^{a},\:$ $ \mathrm{ S.\: Rezakhah}^{a}$\footnote{Address correspondence to Saeid  Rezakhah,  Faculty of Mathematics and Computer Science, Amirkabir University of Technology, Tehran, Iran
E-mail: rezakhah@aut.ac.ir}
 \vspace{.5cm}$$
$ \:\mathrm{ N.\:  Modarresi}^{b}$ \\\\
\small{\em { $^{a}$ Faculty of Mathematics and Computer Science, Amirkabir University of Technology, Tehran, Iran.}}\\
 \hspace{-0.8cm}\small{ \em { $^{b}$ Faculty of Mathematical Sciences and Computer, Allameh Tabataba'i University, Tehran, Iran.}}}\vspace{-1mm}
\date{}
\begin{document}
\maketitle

\begin{abstract}
\noindent
Extending the concept of multi-selfsimilar random field we study multi-scale invariant (MSI) fields which have  component-wise discrete scale invariant property.
Assuming scale parameters as $\lambda_i>1$, $i=1,\ldots,d$ and the parameter space as $(1, \infty)^d$, the first scale rectangle is referred to
the rectangle $ (1, \lambda_1)\times \ldots \times (1, \lambda_d)$.
We show that the covariance function  of the sampled Markov MSI field are characterized by the variances and covariances of samples
inside first scale rectangle.
As an example of MSI field, a two-dimensional simple fractional Brownian  sheet (sfBs) is demonstrated.
Also real data of  the precipitation in some area of Brisbane in Australia
for two days (25 and  26  January 2013) are examined. We show that  precipitation on this area has MSI property and estimate it as a simple MSI field with stationary increments inside scale intervals.
{\color{black} This structure enables us to predict the precipitation in surface and time. We apply the mean absolute percentage error as a measure for the accuracy of  the predictions.}
\\ \\ \\
{\it Mathematics Subject Classification MSC 2010:} 60G18; 60G22; 62M15; 62H05.\\ \\
{\it Keywords:}  Scale invariant Random fields; Self-similarity; Modeling Precipitation, Estimation and Forecasting.
\end{abstract}

\section{Introduction}
Gaussian self-similar fields have been extensively studied and applied in various area as hydrology, biology, economics, finance and image processing
\cite{Sam}.
    In probability theory, a random field is a family of random variables indexed in
a multi-dimensional space.\\
As stated by Genton et al. \cite{Taqqu}, a random field $\{X(\mathbf{t}), \mathbf{t}\in\mathbb{R}^d\}$ is said to be multi-selfsimilar (MSS) if
for some Hurst vector $\mathbf{H}=(H_1,\cdots , H_d)>{\bf 0}$ and any $\boldsymbol{\Lambda}=(\lambda_1, \cdots , \lambda_d)>{\bf 1}$
\begin{equation}\label{eq14}
{\bf X}({\bf\Lambda}\circ{\bf t})\overset{\mathcal{L}}{=}\big(\prod_{i=1}^d \lambda_i^{H_i}\big){\bf X(t)}.
\end{equation}
where $\mathbf{H}=(H_1, \ldots, H_d)$, $\boldsymbol{\Lambda}=(\lambda_1, \ldots, \lambda_d)$ and $\overset{\mathcal{L}}{=}$ denotes equality of
finite dimensional distributions,
 and $\circ$ is the Hadamard product that operates as $\boldsymbol{\Lambda}\circ \mathbf{t}=(\lambda_1t_1,\ldots,\lambda_dt_d)$.
The random field is said to be multi-scale invariant (MSI) of index $\mathbf{H}$ and scale  $\boldsymbol{\Lambda}^{\prime}
=(\lambda^{\prime}_1,\ldots,\lambda^{\prime}_d)$, if (\ref{eq14}) holds for some
case $\boldsymbol{\Lambda}=\boldsymbol{\Lambda}^{\prime}$.
In one-dimensional, the discrete scale invariant (DSI) process initially studied by Borgnat et al. \cite{Borgnat} is a scale invariance or
self-similar process only for specific choice of scale parameter. Balasis et al \cite {Bala1, Bala2, Bala3} and  Bartolozzi et al \cite{Bala4} have studied wide range of applications of DSI processes in Dynamic Magnetosphere, DST time series and stock markets.  \\

Let $\{X(\mathbf{t}), \mathbf{t}\in [1, \infty)^d \}$ be some MSI field with prescribed scale vector $\boldsymbol{\Lambda}
=(\lambda_1,\ldots,\lambda_d)$ where $\lambda_i$'s are greater than one.
Extending the method of Modarresi and Rezakhah \cite{R-M1,R-M2}, we consider component-wise geometric sampling of the field at points
$\boldsymbol{\alpha}^{\mathbf{k}}=(\alpha_1^{k_1},\ldots, \alpha_d^{k_d})^T$, $k_1, k_2, \ldots, k_d\in\mathbb{N}_0=\{0,1,\ldots\}$,
to get $\{X(\boldsymbol{\alpha}^{\mathbf{k}}), \boldsymbol{\alpha}^{\mathbf{k}}\in {\Bbb R}^d\}$ as the sampled MSI field,
where $\alpha_i$'s are determined by $\lambda_i=\alpha_i^{n_i}$ for  $ i=1, 2, \ldots, d$  while $n_1, \ldots, n_d$ are arbitrary positive integers.
So we have $\prod_{i=1}^dn_i $ observations in the first scale rectangle $[1, \lambda_1)\times \ldots \times [1, \lambda_d)$.
In general we consider d-dimensional $(k_1,k_2, \ldots, k_d)$ scale rectangle as
\begin{equation}
[\lambda_1^{k_1-1}, \lambda_1^{k_1})\times [\lambda_2^{k_2-1}, \lambda_2^{k_2})\times\ldots \times [\lambda_d^{k_d-1}, \lambda_d^{k_d}).
\end{equation}
The first scale rectangle is considered as the d-dimensional $(1,1,\ldots, 1)$ scale rectangle.
The MSI field with Markov property is called Markov MSI (MMSI). We show that the covariance function  of the sampled MMSI field  are presented by the covariance function of corresponding samples inside the first scale rectangles.
We present some proper estimation method based on this component-wise sampling scheme by extending the method of estimation the parameters for
DSI processes in \cite{R-M3, R-M4}.\\
This paper is motivated by applications in environmental and climate phenomena.
Precipitation is one of the key terms for balancing the energy budget, and one of the most challenging aspects of climate modeling.
Basic research performed in the statistical analysis and studied the variability in the distribution of rainfall to obtain accurate prediction
\cite{Olat}, \cite{Tul}.
As an example of MSI field, real data of the precipitation in some part of Brisbane area of Australia for some special period of time are considered.
The MSI behavior of these precipitation in three dimension as latitude, longitude and time are verified \cite{Australia}.
Also the corresponding time dependent scale  and Hurst parameter of the MSI field are estimated.
By estimating these parameters, we predict the precipitation in surface and time. All prediction methods have errors in predicting. So we use mean absolute percentage error (MAPE) as a statistical measure that calculate the error of the predictions.  We show that our predictions are  highly accurate. \\
\\
The rest of the paper is organized as follows. Section 2 is devoted to some preliminaries and also definitions of MSS and MSI fields.
The component-wise geometric  sampling scheme and some proper quasi-Lamperti transformation are defined in this section.
The definition of a two-dimensional simple fractional Brownian sheet (sfBs) as an example of MSI field is given in section 2 as well.
The characterization of covariance and spectral density functions of the two-dimensional scale invariant wide-sense Markov fields are presented
in section {3. In section 4 }we introduce a heuristic method for the estimation of Hurst parameter of MSI fields. Implying the rainfall data of
Brisbane area of Australian bureau of meteorology, their MSI property of the field is verified and also the scale and Hurst parameters
of this field are estimated. {\color{black}{\color{red}}Finally, we study the prediction of the precipitation and employ the mean absolute percentage  error(MAPE) index to determine the accuracy of the prediction.}

\renewcommand{\theequation}{\arabic{section}.\arabic{equation}}
\section{Theoretical Structure} \setcounter{equation}{0}
In this section, we present the definitions of the multi-selfsimilar (MSS) and multi-scale invariant (MSI) fields to be prescribed by some parameter space. Then we introduce a modified version of Lamperti transformation which provides a one to one correspondence between sampled MSS and discrete time stationary fields and also between sampled MSI and discrete time periodic fields respectively.\\
First we present the definition of periodic field and we use them as the Lamperti counterpart of self-similar field to obtain the  covariance structure  of MSI field.

\begin{defn}
The random field is said to be stationary field if for any  $\tau \in \mathbb{R}^d$
\begin{equation*}
\{\mathcal{S}_{\boldsymbol{\tau}}X(\mathbf{t}), t \in \mathbb{R}^d\}\overset{\mathcal{L}}{=}\{X(\mathbf{t}), t \in \mathbb{R}^d\},
\end{equation*}
where for any  $t$ and  $\boldsymbol{\tau}\in\mathbb{R}^d $, the shift operator $\mathcal{S}_{\boldsymbol{\tau}}$ acts as $\mathcal{S}_{\boldsymbol{\tau}}X(\mathbf{t}):=X(\mathbf{t}+{\boldsymbol{\tau}}).$
The random field is called  periodic with period $\boldsymbol{\tau}_0$ if the above equality holds  just for $\boldsymbol{\tau}=\boldsymbol{\tau}_0$.
\end{defn}

\begin{defn}
A second order random field is called periodically correlated (PC) if its mean and covariance function has a periodic structure for some $\boldsymbol{\tau}$, see \cite{Hurd}
$$E[X(\mathbf{t}+\boldsymbol{\tau})]=E[X(\mathbf{t})],\hspace{1in}
Cov(X(\mathbf{t}), X(\mathbf{s}))=Cov(X(\mathbf{t}+\boldsymbol{\tau}), X(\mathbf{s}+\boldsymbol{\tau})).$$
Periodic field with finite second moment is also a PC random field.
\end{defn}
\noindent It should be noted that a second order random field is square integrable over the parameter space.
Extending some definitions in Modarresi et al. \cite{R-M1} for DSI process with some parameter space, we present the following definitions.
\begin{defn}
A random field $\{X(\mathbf{k}), \mathbf{k}\in\check{\mathbf{T}}\}$ is called MSS with parameter space $\check{\mathbf{T}}$, where $\check{\mathbf{T}}$ is any subset of $[1, \infty)^d$ and for any
$\mathbf{k}_1=(k_{11}, k_{12},\ldots, k_{1d})^T$, $\mathbf{k}_2=(k_{21}, k_{22},\ldots, k_{2d})^T\in\check{\mathbf{T}}$
\vspace{-2mm}
\begin{equation}\label{eq15}
\{X(\mathbf{k}_2)\}\overset{\mathcal{L}}{=}\Big(\prod_{i=1}^d\big(\frac{k_{2i}}{k_{1i}}\big)^{H_i}\Big)\{X(\mathbf{k}_1)\}.
\end{equation}
The random field $X(\cdot)$ is called MSI with parameter space $\check{\mathbf{T}}$ and scale
$\boldsymbol{\Lambda}=(\lambda_1,\ldots,\lambda_d)$ if for any $\mathbf{k}_1,\mathbf{k}_2\in\check{\mathbf{T}}$, (\ref{eq15}) holds
where $k_{2i}=\lambda_ik_{1i}$ and $\lambda_i>1$ for $ i=1,\ldots,d$. Furthermore, it is to mention that the Hurst parameter in these fields
are not restricted with one and might be some other finite values.
\end{defn}
\noindent
Now, we are to consider some geometric sampling of the MSI field at points $\boldsymbol{\alpha}=(\alpha_1, \ldots, \alpha_d)^T$ where $\alpha_i>1$
for $i=1,\ldots ,d$.
\begin{rem}
By assuming $k_1, \ldots, k_d$ to be fixed integers $ k_i\in\{1, 2, \ldots, n_i\}$ and sampling of the MSI field at points
$\big \{\boldsymbol{\alpha}^{\mathbf{l}\mathbf{n}+\mathbf{k}}=(\alpha_1^{l_1n_1+k_1}
,\ldots, \alpha_d^{l_dn_d+k_d})^T: l_i\in \mathbb{N}_0, i=1,\ldots ,d\big \}$   we have
an MSS field with parameter space $\check{\mathbf{T}}=\{\boldsymbol{\alpha}^{\mathbf{l}\mathbf{n}+\mathbf{k}}, \mathbf{l}\in\mathbb{N}_0^d\}$.
\end{rem}
\noindent
Similar to the concept of the wide-sense self-similar process presented by Nuzman et al. \cite{Nuzman}, we have the following definition.
\begin{defn}
A second order random field $\{X(\mathbf{t}), \mathbf{t}\in\mathbb{R}^d_+\}$ is said to be wide-sense MSS
{with index  $\mathbf{H}=(H_1, \ldots, H_d)$}, if the following properties are
satisfied for
${\mathbf{t},\mathbf{t}_1,\mathbf{t}_2}
 \in \mathbb{R}^d_+ $ and $\mathbf{a}=  (a_1,\ldots,a_d)$ where $a_i>0$
\\
\hspace{1cm} $(i) E[X^2(\mathbf{t})]<\infty$
\\
$(ii) E[X(\mathbf{a\circ t})]=\big(\prod_{i=1}^d a_i^{H_i}\big)E[X(\mathbf{t})]$
\\
$(iii) E[X(\mathbf{a\circ t_1})X(\mathbf{a\circ t_2})]=\big(\prod_{i=1}^d a_i^{2H_i}\big)E[X(\mathbf{t_1})X(\mathbf{t_2})]$
\\
{where $\circ$ is the Hadamard product defined in (1.1)}. This field is called wide-sense MSI of index $\mathbf{H}$ and scale $\mathbf{a}^{\prime}=(a^{\prime}_1,\ldots,a^{\prime}_d)$
where $a^{\prime}_i>0$, if the above conditions hold for some $\mathbf{a}=\mathbf{a}^{\prime}$.
\end{defn}

\noindent
To find a one-to-one correspondence between the shift and renormalized operators and also between MSI and periodic fields, we introduce the
quasi-Lamperti transformation.
In the rest of the paper we consider MSS and MSI in the wide-sense fields, so for simplicity we omit the term "in the wide sense" henceforth.

\begin{defn}
The quasi-Lamperti transform $\mathcal{L}_{\mathbf{H},\boldsymbol{\alpha}}$   with positive Hurst vector $\mathbf{H}=(H_1,\ldots, H_d)$ and positive scale vector  $\boldsymbol{\alpha}=(\alpha_1, \ldots, \alpha_d)$,  operates on a random field $\{Y(\mathbf{t}), \mathbf{t}\in\mathbb{R}^d_+\}$ as
\vspace{-3mm}
\begin{equation}\label{eq16}
\mathcal{L}_{\mathbf{H},\boldsymbol{\alpha}}Y(\mathbf{t})=\big(\prod_{i=1}^d t_i^{H_i}\big)Y(\mathbf{Log}_{\boldsymbol{\alpha}}\mathbf{t}),
\end{equation}
where $\mathbf{Log}_{\boldsymbol{\alpha}}\mathbf{t}=(\log_{\alpha_1}t_1, \ldots,\log_{\alpha_d}t_d )^T$.
The corresponding inverse quasi-Lamperti transformation $\mathcal{L}^{-1}_{\mathbf{H},\boldsymbol{\alpha}}$ acts as
\vspace{-4mm}
\begin{equation}\label{eq17}
\mathcal{L}^{-1}_{\mathbf{H},\boldsymbol{\alpha}}X(\mathbf{t})=\prod_{i=1}^d \alpha_i^{-t_iH_i}X(\boldsymbol{\alpha}^{\mathbf{t}}),
\end{equation}
where $\boldsymbol{\alpha}^{\mathbf{t}}=(\alpha_1^{t_1},\ldots, \alpha_d^{t_d})^T$.
\end{defn}
\noindent
One can easily verify that $\mathcal{L}_{\mathbf{H},\boldsymbol{\alpha}}\mathcal{L}^{-1}_{\mathbf{H},\boldsymbol{\alpha}}X(\mathbf{t})= X(\mathbf{t})$ and
$\mathcal{L}^{-1}_{\mathbf{H},\boldsymbol{\alpha}}\mathcal{L}_{\mathbf{H},\boldsymbol{\alpha}}Y(\mathbf{t})= Y(\mathbf{t})$. If $\boldsymbol{\alpha}=(e,\ldots,e)^T$,
then $\mathcal{L}_{\mathbf{H},\boldsymbol{\alpha}}$ turn to be the usual Lamperti transformation, see \cite{Taqqu}.

\begin{prop}
The quasi-Lamperti transformation guarantees an equivalence between the shift operator $\mathcal{S}_{\mathbf{Log}_{\boldsymbol{\alpha}}\mathbf{\lambda }}$ and
the renormalized dilation operator $\mathcal{D}_{\mathbf{H},\mathbf{\lambda }}$ in the sense that, for any $\mathbf{\lambda}>\mathbf{0}$
\begin{equation}\label{eq18}
\mathcal{L}^{-1}_{\mathbf{H},\boldsymbol{\alpha}}\mathcal{D}_{\mathbf{H},\mathbf{\lambda }}\mathcal{L}_{\mathbf{H},\boldsymbol{\alpha}}
=\mathcal{S}_{\mathbf{Log}_{\boldsymbol{\alpha}}
\mathbf{\lambda}},
\end{equation}
where  $\mathcal{D}_{\mathbf{H},\mathbf{\lambda }}$    is defined by

\begin{equation*}
\mathcal{D}_{\mathbf{H},\boldsymbol{\Lambda}}X(\mathbf{t}):=\Big(\prod_{i=1}^d \lambda_i^{-H_i}\Big)X(\boldsymbol{\Lambda}\circ \mathbf{t}).
\end{equation*}

\end{prop}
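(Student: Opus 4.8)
The plan is to prove the operator identity (\ref{eq18}) by a direct computation, applying the three operators on the left-hand side in order — first $\mathcal{L}_{\mathbf{H},\boldsymbol{\alpha}}$, then $\mathcal{D}_{\mathbf{H},\boldsymbol{\lambda}}$, then $\mathcal{L}^{-1}_{\mathbf{H},\boldsymbol{\alpha}}$ — to an arbitrary test field $\{Y(\mathbf{t}),\,\mathbf{t}\in\mathbb{R}^d_+\}$, and showing that the net effect is precisely the shift by $\mathbf{Log}_{\boldsymbol{\alpha}}\boldsymbol{\lambda}$. Since all operators involved are linear, checking the identity on a generic $Y$ establishes it as an operator equality. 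The only inputs needed are the definitions (\ref{eq16}) and (\ref{eq17}) of the quasi-Lamperti transform and its inverse, the displayed definition of $\mathcal{D}_{\mathbf{H},\boldsymbol{\lambda}}$, and two elementary logarithm identities.

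Concretely, I would first set $X(\mathbf{t}):=\mathcal{L}_{\mathbf{H},\boldsymbol{\alpha}}Y(\mathbf{t})=\big(\prod_{i=1}^d t_i^{H_i}\big)Y(\mathbf{Log}_{\boldsymbol{\alpha}}\mathbf{t})$ by (\ref{eq16}). Applying $\mathcal{D}_{\mathbf{H},\boldsymbol{\lambda}}$ replaces $\mathbf{t}$ by $\boldsymbol{\lambda}\circ\mathbf{t}$ and multiplies by $\prod_{i=1}^d\lambda_i^{-H_i}$; the dilated prefactor then contributes $\prod_{i=1}^d(\lambda_it_i)^{H_i}$, whose factor $\prod_{i=1}^d\lambda_i^{H_i}$ cancels $\prod_{i=1}^d\lambda_i^{-H_i}$, leaving $\big(\prod_{i=1}^d t_i^{H_i}\big)Y\big(\mathbf{Log}_{\boldsymbol{\alpha}}(\boldsymbol{\lambda}\circ\mathbf{t})\big)$. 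Here I use the componentwise identity $\mathbf{Log}_{\boldsymbol{\alpha}}(\boldsymbol{\lambda}\circ\mathbf{t})=\mathbf{Log}_{\boldsymbol{\alpha}}\boldsymbol{\lambda}+\mathbf{Log}_{\boldsymbol{\alpha}}\mathbf{t}$, which is just $\log_{\alpha_i}(\lambda_it_i)=\log_{\alpha_i}\lambda_i+\log_{\alpha_i}t_i$. Finally, applying $\mathcal{L}^{-1}_{\mathbf{H},\boldsymbol{\alpha}}$ via (\ref{eq17}) replaces $\mathbf{t}$ by $\boldsymbol{\alpha}^{\mathbf{t}}$ and multiplies by $\prod_{i=1}^d\alpha_i^{-t_iH_i}$; the prefactor picked up from the previous step becomes $\prod_{i=1}^d(\alpha_i^{t_i})^{H_i}=\prod_{i=1}^d\alpha_i^{t_iH_i}$, cancelling exactly against $\prod_{i=1}^d\alpha_i^{-t_iH_i}$, while $\mathbf{Log}_{\boldsymbol{\alpha}}(\boldsymbol{\alpha}^{\mathbf{t}})=\mathbf{t}$ since $\log_{\alpha_i}\alpha_i^{t_i}=t_i$. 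What remains is $Y\big(\mathbf{t}+\mathbf{Log}_{\boldsymbol{\alpha}}\boldsymbol{\lambda}\big)=\mathcal{S}_{\mathbf{Log}_{\boldsymbol{\alpha}}\boldsymbol{\lambda}}Y(\mathbf{t})$, which is the asserted identity.

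I do not expect a genuine obstacle here; the proof is bookkeeping, and the only care needed is to keep the three sets of power factors ($\lambda_i^{\pm H_i}$, $t_i^{H_i}$, $\alpha_i^{\pm t_iH_i}$) straight and to invoke the two logarithm identities at the right stages. The one point I would remark on is the consistency of domains along the composition: $\mathcal{L}_{\mathbf{H},\boldsymbol{\alpha}}$ carries a field on $\mathbb{R}^d_+$ to a field on $\mathbb{R}^d_+$, on which $\mathcal{D}_{\mathbf{H},\boldsymbol{\lambda}}$ acts, and $\mathcal{L}^{-1}_{\mathbf{H},\boldsymbol{\alpha}}$ then returns a field on $\mathbb{R}^d$ on which $\mathcal{S}_{\mathbf{Log}_{\boldsymbol{\alpha}}\boldsymbol{\lambda}}$ is defined, so every composition is well posed (with the implicit convention $\alpha_i>1$ so that $\mathbf{Log}_{\boldsymbol{\alpha}}$ is meaningful). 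One could also phrase the argument more abstractly by noting that conjugation by the invertible operator $\mathcal{L}_{\mathbf{H},\boldsymbol{\alpha}}$, using the already-noted relations $\mathcal{L}_{\mathbf{H},\boldsymbol{\alpha}}\mathcal{L}^{-1}_{\mathbf{H},\boldsymbol{\alpha}}=\mathcal{L}^{-1}_{\mathbf{H},\boldsymbol{\alpha}}\mathcal{L}_{\mathbf{H},\boldsymbol{\alpha}}=\text{identity}$, intertwines the multiplicative dilation group with the additive shift group; but since the verification of that intertwining is exactly the computation above, I would simply present the direct calculation.
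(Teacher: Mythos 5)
Your direct computation is correct and is exactly the argument the paper has in mind: the paper's own ``proof'' consists of a single sentence deferring to the analogous one-dimensional verification in the cited reference [R-M1], and your component-wise bookkeeping (cancellation of the $\lambda_i^{\pm H_i}$ and $\alpha_i^{\pm t_iH_i}$ factors together with $\log_{\alpha_i}(\lambda_i t_i)=\log_{\alpha_i}\lambda_i+\log_{\alpha_i}t_i$ and $\log_{\alpha_i}\alpha_i^{t_i}=t_i$) is precisely that method carried out explicitly in $d$ dimensions. No gap; your version is in fact more complete than what appears in the paper.
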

\begin{proof}
By a similar method as in \cite{R-M1}, the validation of (\ref{eq18}) follows.
\end{proof}

\begin{cor}
If $\{X(\mathbf{t}), \mathbf{t}\in\mathbb{R}^d_+\}$ is a MSI field with scale ${\alpha}^{\mathbf{U}}$ then $\mathcal{L}^{-1}_{\mathbf{H},\boldsymbol{\alpha}}X(\mathbf{t})=Y(\mathbf{t})$ is periodic field with period $\mathbf{U}>\mathbf{0}$. Conversely if $\{Y(\mathbf{t}), \mathbf{t}\in\mathbb{R}^d\}$ is periodic field with period $\mathbf{U}$ then $\mathcal{L}_{\mathbf{H},\boldsymbol{\alpha}}Y(\mathbf{t})=X(\mathbf{t})$ is MSI with scale  ${\alpha}^{\mathbf{U}}$.
\end{cor}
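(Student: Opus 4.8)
The plan is to read the corollary straight off the operator identity (\ref{eq18}) of the preceding proposition, together with two translations of terminology: in the wide-sense setting, ``$X$ is MSI with scale $\boldsymbol{\Lambda}$'' is exactly the statement $\mathcal{D}_{\mathbf{H},\boldsymbol{\Lambda}}X\overset{\mathcal{L}}{=}X$ (because $\mathcal{D}_{\mathbf{H},\boldsymbol{\Lambda}}X(\mathbf{t})=(\prod_i\lambda_i^{-H_i})X(\boldsymbol{\Lambda}\circ\mathbf{t})$, and conditions $(ii)$–$(iii)$ say this field has the same mean and covariance as $X$), while ``$Y$ is periodic with period $\mathbf{U}$'' is exactly $\mathcal{S}_{\mathbf{U}}Y\overset{\mathcal{L}}{=}Y$; here $\overset{\mathcal{L}}{=}$ for second-order fields is read as equality of the mean and covariance functions. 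I would first record that $\boldsymbol{\alpha}^{\mathbf{U}}>\mathbf{1}$ when $\mathbf{U}>\mathbf{0}$ (so the scale is admissible), that $\mathbf{Log}_{\boldsymbol{\alpha}}$ maps $\mathbb{R}^d_+$ bijectively onto $\mathbb{R}^d$ and $\boldsymbol{\alpha}^{(\cdot)}$ is its inverse (so the two transforms genuinely interchange the index sets $\mathbb{R}^d_+$ and $\mathbb{R}^d$), and that $\mathbf{Log}_{\boldsymbol{\alpha}}\boldsymbol{\alpha}^{\mathbf{U}}=\mathbf{U}$ since $\log_{\alpha_i}\alpha_i^{U_i}=U_i$.

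With $\boldsymbol{\Lambda}:=\boldsymbol{\alpha}^{\mathbf{U}}$, specializing (\ref{eq18}) to $\boldsymbol{\lambda}=\boldsymbol{\Lambda}$ gives $\mathcal{L}^{-1}_{\mathbf{H},\boldsymbol{\alpha}}\mathcal{D}_{\mathbf{H},\boldsymbol{\Lambda}}\mathcal{L}_{\mathbf{H},\boldsymbol{\alpha}}=\mathcal{S}_{\mathbf{U}}$, and, using $\mathcal{L}_{\mathbf{H},\boldsymbol{\alpha}}\mathcal{L}^{-1}_{\mathbf{H},\boldsymbol{\alpha}}=\mathrm{id}$, equivalently $\mathcal{D}_{\mathbf{H},\boldsymbol{\Lambda}}=\mathcal{L}_{\mathbf{H},\boldsymbol{\alpha}}\mathcal{S}_{\mathbf{U}}\mathcal{L}^{-1}_{\mathbf{H},\boldsymbol{\alpha}}$. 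For the forward implication, let $X$ be MSI with scale $\boldsymbol{\alpha}^{\mathbf{U}}$ and put $Y=\mathcal{L}^{-1}_{\mathbf{H},\boldsymbol{\alpha}}X$; then $\mathcal{S}_{\mathbf{U}}Y=\mathcal{L}^{-1}_{\mathbf{H},\boldsymbol{\alpha}}\mathcal{D}_{\mathbf{H},\boldsymbol{\Lambda}}\mathcal{L}_{\mathbf{H},\boldsymbol{\alpha}}Y=\mathcal{L}^{-1}_{\mathbf{H},\boldsymbol{\alpha}}\mathcal{D}_{\mathbf{H},\boldsymbol{\Lambda}}X$, and since $\mathcal{D}_{\mathbf{H},\boldsymbol{\Lambda}}X\overset{\mathcal{L}}{=}X$ we conclude $\mathcal{S}_{\mathbf{U}}Y\overset{\mathcal{L}}{=}\mathcal{L}^{-1}_{\mathbf{H},\boldsymbol{\alpha}}X=Y$. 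The converse is the mirror image: for $Y$ periodic with period $\mathbf{U}$, set $X=\mathcal{L}_{\mathbf{H},\boldsymbol{\alpha}}Y$, so $\mathcal{D}_{\mathbf{H},\boldsymbol{\Lambda}}X=\mathcal{L}_{\mathbf{H},\boldsymbol{\alpha}}\mathcal{S}_{\mathbf{U}}\mathcal{L}^{-1}_{\mathbf{H},\boldsymbol{\alpha}}X=\mathcal{L}_{\mathbf{H},\boldsymbol{\alpha}}\mathcal{S}_{\mathbf{U}}Y\overset{\mathcal{L}}{=}\mathcal{L}_{\mathbf{H},\boldsymbol{\alpha}}Y=X$, i.e.\ $X$ is MSI with scale $\boldsymbol{\alpha}^{\mathbf{U}}$.

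The one step I would spell out rather than wave at is the assertion that the deterministic transforms $\mathcal{L}_{\mathbf{H},\boldsymbol{\alpha}}$ and $\mathcal{L}^{-1}_{\mathbf{H},\boldsymbol{\alpha}}$ transport the relation $\overset{\mathcal{L}}{=}$ — that is, that two second-order fields with equal mean and covariance are sent to two fields with equal mean and covariance. This holds because $\mathcal{L}_{\mathbf{H},\boldsymbol{\alpha}}Y(\mathbf{t})=(\prod_i t_i^{H_i})\,Y(\mathbf{Log}_{\boldsymbol{\alpha}}\mathbf{t})$ is merely a deterministic scalar multiple of a deterministic re-indexing of $Y$ (and likewise for (\ref{eq17})), so the mean and covariance of the image are explicit deterministic multiples/compositions of those of $Y$; the second-order/integrability conditions $(i)$–$(iii)$ on the MSI side then follow automatically from square-integrability and periodicity on the $Y$ side, and conversely. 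I expect this to be the only place requiring care; the remainder is bookkeeping with (\ref{eq18}) and $\mathbf{Log}_{\boldsymbol{\alpha}}\boldsymbol{\alpha}^{\mathbf{U}}=\mathbf{U}$. As an alternative I could bypass the proposition entirely and verify both directions by substituting (\ref{eq16})–(\ref{eq17}) directly into conditions $(ii)$–$(iii)$, but routing through (\ref{eq18}) keeps the argument short and reuses what is already proved.
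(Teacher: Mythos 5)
Your proof is correct and follows exactly the route the paper intends: the corollary is stated without proof as an immediate consequence of the operator identity (2.5), read together with the characterizations of wide-sense MSI as invariance under $\mathcal{D}_{\mathbf{H},\boldsymbol{\Lambda}}$ and of periodicity as invariance under $\mathcal{S}_{\mathbf{U}}$. Your added care about the transforms preserving equality of means and covariances, and about $\mathbf{Log}_{\boldsymbol{\alpha}}\boldsymbol{\alpha}^{\mathbf{U}}=\mathbf{U}$, only makes explicit what the paper leaves implicit.
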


\begin{rem}
If $X(.)$ is a MSS with parameter space $\check{\mathbf{T}}=\{(\alpha_1^{n_1U_1},\ldots, \alpha_d^{n_dU_d}), n_1,\ldots,n_d\in\mathbb{N}_0\}$ and Hurst vector $\mathbf{H}=(H_1,\ldots, H_d)^T$, then it is easy to show that its stationary counterpart $Y(.)$ has parameter space $\tilde{\mathbf{T}}=\{(n_1U_1,\ldots,n_dU_d), n_1,\ldots,n_d\in\mathbb{N}_0\}$.
\end{rem}
\noindent
A Brownian sheet is a natural extension of the Brownian motion to a two-dimensional random field and is one of the most important examples of the Gaussian random fields. Furthermore, some properties has been studied such as a method to study Brownian sheet by the linear stochastic partial differential equations \cite{Arat}.
Many data sets have anisotropic nature in the sense that they have different geometric and probabilistic characteristics along different directions, hence fractional Brownian motion is not adequate for modeling such phenomena. So several different classes of anisotropic Gaussian random fields such as fractional Brownian sheets have been introduced for theoretical and application purposes and some sample-function behavior of them studied \cite{Wang}, \cite{Xiao3}.
In the following, we present the definitions of centered Gaussian random field as the fractional Brownian sheet and the stationary rectangular increments property \cite{Mako}.
\begin{defn}
The normalized fractional Brownian sheet with Hurst index $\mathbf{ H}=\!(H_1, \ldots, H_n)$ {where $\mathbf{ H}\in\!\!(0,1)^n$}, is the centered Gaussian random field
$B^{\mathbf{ H}}=\{B^{\mathbf{ H}}(\mathbf{ t}), {\mathbf{ t}}\in{\mathbb{ R}}^n_{+}\}$ with covariance function
\begin{equation*}
E [B^{\mathbf{ H}}(\mathbf{ t})B^{\mathbf{ H}}(\mathbf {s})]=
2^{-n}\prod_{i=1}^{n}(|t_i|^{2H_i}+|s_i|^{2H_i}-|t_i-s_i|^{2H_i}),\,\,\,\,\,\,\,\, {\bf t}, {\bf s}\in{\Bbb R}^n_{+}.
\end{equation*}
This field is self-similar with index $\bf H$ by the definition in (1.1).
\end{defn}
\vspace{.4cm}

{\noindent
{\bf \large  Simple fractional Brownian sheet:}\\\

\noindent
Here we elaborate on flexible structure of MSI fields  in order to provide a platform for modeling  MSI field with  different correlation structure between  samples in different scale rectangles. So we define simple MSI field and in particular simple fractional Brownian sheet  { that defines a grid of scale rectangles where  despite the MSI behavior, samples inside each scale rectangle constitute some fractional Brownian sheet. To describe the structure of sfBs,  first we  define  a  double array sequence of fractional Brownian sheets 
where their  cross covariance functions follow component-wise discrete scale invariant property. }  


{
\noindent
\begin{defn} \label{d7}
Let 
$\big \{ B^{\mathbf{H}^{\prime}}_{n_1,n_2}(t_1,t_2) : n_1, n_2 \in {\Bbb N},\; (t_1, t_2)\in  [ \lambda_1^{n_1-1}, \lambda_1^{n_1})\times   [\lambda_2^{n_2-1}, \lambda_2^{n_2}) \big\}$  be a double array sequence of  fractional Brownian sheets with common Hurst indices $\mathbf{H}^{\prime}=(H_1^{\prime}, H_2^{\prime})$ 
where their cross covariance functions have component-wise discrete scale invariant property 
 with scales $\;\lambda_1,\lambda_2>1$ and same Hurst indices $H_1^{\prime}, H_2^{\prime}$.  
\end{defn}

\noindent
So   for fixed non-negative integers  $n_1, n_2$ , the  fractional Brownian sheet  $B^{\mathbf{H}'}_{n_1,n_2}(t_1, t_2)$  is defined on the scale rectangle 
${\mathbf A_{n_1,n_2}}:=[ \lambda_1^{n_1-1}, \lambda_1^{n_1})\times   [\lambda_2^{n_2-1}, \lambda_2^{n_2})$ 
and has the  covariance  structure defined by Definition 6. Also  the component-wise  discrete scale invariant property of the
     cross covariance  functions  states  that   for any 
 $(t_1, t_2)\in  {\mathbf A_{n_1,n_2}}   $, $\, (s_1, s_2)\in  {\mathbf A_{m_1,m_2}} $ and any non-negative integers $k_1,  k_2 $: }\vspace{-.25in}

\begin{eqnarray*}Cov\big(B^{\mathbf{H}^{\prime}}_{n_1+k_1, n_2+k_2}(\lambda_1^{k_1}t_1,\! \lambda_2^{k_2}t_2),\! B^{\mathbf{H}^{\prime}}_{m_1+k_1,m_2+k_2}(\lambda_1^{k_1}s_1,\! 
\lambda_2^{k_2}s_2)\big)
&\!\!\!=\!\!\!&\\ && \hspace{-1in}\prod_{i=1}^2 \! \lambda_i^{2k_iH^{\prime}_i}Cov\big(B^{\mathbf{H}^{\prime}}_{n_1,n_2}(t_1,\! t_2),\! B^{\mathbf{H}^{\prime}}_{m_1,m_2}(s_1,\!s_2)\big).\end{eqnarray*}

\noindent
Now we present the definition of simple fractional Brownian sheet (sfBs) as an example of MSI field. This Gaussian random field can be used to approximate many MSI fields.
\begin{defn}
The two-dimensional sfBs $\{X(\mathbf{t}), \mathbf{t}\in[1, \infty)^2\}$ is defined by
\begin{equation}\label{TS}
X(t_1,t_2)=\sum_{n_1=1}^{\infty}\sum_{n_2=1}^{\infty}\lambda_1^
{n_1(H_1-H_1^{\prime})}\lambda_2^{n_2(H_2-H_2^{\prime})}I_{[\lambda_1^{n_1-1},\lambda_1^{n_1})}(t_1)I_{[\lambda_2^{n_2-1},\lambda_2^{n_2})}(t_2)B^{\mathbf{H}^{\prime}}_{n_1,n_2}(t_1,t_2),
\end{equation}
is an  MSI field with Hurst $\mathbf{H}=(H_1, H_2)^T$ and scale $\boldsymbol{\lambda}=(\lambda_1, \lambda_2)$ where, $H_i>0 $, $\lambda_i>1$ for $i=1,2$,   $\big \{ B^{\mathbf{H}^{\prime}}_{n_1,n_2}(t_1,t_2) : n_1, n_2 \in {\Bbb N};\, t_1, t_2\in {\Bbb [0, \infty)}\big\}$ is introduced in  Definition \ref{d7}   and $I(.)$ is  an indicator function.   The intervals $[\lambda_1^{n_1-1},\lambda_1^{n_1})$ and $[\lambda_2^{n_2-1},\lambda_2^{n_2})$ are called $n_1$-th horizontal and $n_2$-th vertical scale intervals respectively. If $\mathbf{H}^{\prime}=(\frac{1}{2},\frac{1}{2})^T$, $B^{\mathbf{H}^{\prime}}_{n_1,n_2}(.,.)$  is the two-dimensional Brownian sheet and $X(.,.)$ is called two-dimensional simple Brownian sheet.
\end{defn}

\noindent
  One can easily verify the MSI property of sfBs through its covariance structure of corresponding sample:

$$
Cov\big(X(\lambda_1t_1,\lambda_2t_2), X(\lambda_1s_1,\lambda_2s_2)\big)
=\prod_{i=1}^2\lambda_i^{2H_i}Cov\big(X(t_1,t_2), X(s_1,s_2)\big).\hspace{3.5cm}\\
$$
Thus $X(t_1,t_2)$ is MSI field with scale parameters $\lambda_1$ and $\lambda_2$.
}

\renewcommand{\theequation}{\arabic{section}.\arabic{equation}}
\section{Two-dimensional Scale Invariant Markov Fields}
\setcounter{equation}{0}
The covariance function of a Markov random field $\{X(s,t), (s,t)\in\mathbb{R}^2\} $ is called separable if it satisfies
\begin{equation*}
Q(s_1,s_2, t_1,t_2)=Q_{1}(s_1,t_1)
Q_{2}(s_2,t_2),
\end{equation*}
where $Q(s_1,s_2, t_1,t_2)= Cov(X(s_1,s_2), X(t_1,t_2))$ and $Q_{1}(s_1,t_1)$, $Q_{2}(s_2,t_2)$ have the properties of the covariance functions of the
Markov processes, see \cite{Genton} and  \cite{Rosenfeld}.
Also there exists some statistical methods to test the separability of the covariance function of random fields, see \cite{fu}.\\
Let $\{X(s,t), (s,t)\in [1, \infty)^2\} $ be some MSI field with separable  covariance function. If the covariance function
$Q(s_1,s_2, t_1,t_2)$ has MSI property, then the covariance functions $Q_{1}(s_1,t_1)$ and $Q_{2}(s_2,t_2)$ can be considered as the covariance
functions of DSI Markov processes $X_1$ and $X_2$. The processes $X_1$ and $X_2$ exist as a result of the assumption that the field has DSI
property in each component in the introduced field.
So $Q_{1}(s_1,t_1)=Cov(X_1(s_1),X_1(t_1))$ and $Q_{2}(s_2,t_2)=Cov(X_2(s_2),X_2(t_2))$.
In this section we show that the covariance function of the MMSI field with separable property is characterized by the covariance functions of samples
on the first scale rectangle.\\
Following Remark 1, we consider sampled two-dimensional MMSI field as $\{X(\alpha_1^{n_1},\alpha_2^{n_2}), (n_1,n_2)\\\in\mathbb{Z}^2\} $ that
has separable covariance function with Hurst   $\mathbf{H}=(H_1, H_2)$ and scale  $\boldsymbol{\Lambda}= (\alpha_1^{T_1}, \alpha_2^{T_2})$.
Let
\begin{equation}\label{st1}
Q_{\mathbf{n}}^{\mathbf{H}}(\boldsymbol{\tau}):
=Cov[X(\alpha_1^{n_1+\tau_1},\alpha_2^{n_2+\tau_2}), X(\alpha_1^{n_1},\alpha_2^{n_2})]
\end{equation}
for
$\mathbf{n}=(n_1,n_2) , \boldsymbol{\tau}=(\tau_1, \tau_2)\in\mathbb{Z}^2$.  Also assume that  $\{X_i(\alpha_i^k), k\in\mathbb{Z}\}$ be a DSI
Markov process with parameters $(H_i,\alpha_i^{T_i})$ and covariance function
$Q_{i,n_i}^{H_i}(\tau_i)=Cov[X_i(\alpha_i^{n_i+\tau_i}),X_i(\alpha_i^{n_i})]$ for  $i=1,2$. So by the separable  property of the field we have that
\begin{equation*}
Q_{\mathbf{n}}^{\mathbf{H}}(\boldsymbol{\tau})=Q_{1,n_1}^{H_1}(\tau_1)Q_{2,n_2}^{H_2}(\tau_2).
\end{equation*}
Thus by Theorem 3.2 in \cite{R-M1},
\begin{equation}\label{r11}
Q_{\mathbf{n}}^{\mathbf{H}}(\mathbf{kT}+\boldsymbol{\nu})=[{\mathbf{h}}
(\boldsymbol{\alpha}^{\mathbf{T-1}})]^{\mathbf{k}} {\mathbf{h}}(\boldsymbol{\alpha}^{\boldsymbol{\nu}
+\mathbf{n-1}})[{\boldsymbol{h}}(\boldsymbol{\alpha}^{\mathbf{n-1}})]^{-1}Q_{\mathbf{n}}^{\mathbf{H}}(\mathbf{0}),
\end{equation}
and
\begin{equation*}
Q_{\mathbf{n}}^{\mathbf{H}}(-\mathbf{kT}+\boldsymbol{\nu})=\boldsymbol{\alpha}^
{-2\mathbf{kTH}}
Q_{\mathbf{n}+\boldsymbol{\nu}}^{\mathbf{H}}(\mathbf{kT}-\boldsymbol{\nu}),
\end{equation*}
where
$$
{\boldsymbol{h}}(\boldsymbol{\alpha}^{\mathbf{r}})={h}_1(\alpha_1^{r_1})
{h}_2(\alpha_2^{r_2}),\:\:\:\:\:\:\:\: \boldsymbol{\alpha}^
{-2\mathbf{kTH}}=\alpha_1^{-2k_1T_1H_1}\alpha_2^{-2k_2T_2H_2}
$$
and for $i=1,2,$ $k_i\in\{0,1,2,\ldots \}, \nu_i=0,1,\ldots, T_i-1$ and
$
{h}_i(\alpha_i^{r_i})
=\prod_{j=0}^{r_i}\frac{Q_{i,j}^{H_i}(1)}
{Q_{i,j}^{H_i}(0)}
$,
 ${h}_i(\alpha_i^{-1})=1$.
Furthermore, for $i=1,2$
$$\frac{Q_{i,j}^{H_i}(1)}{Q_{i,j}^{H_i}(0)}=\frac{Q_{i,j+T_i}^{H_i}(1)}{Q_{i,j+T_i}^{H_i}(0)},$$
so ${h}_i(\alpha_i^{lT_i+m-1})=\big({h}_i(\alpha_i^{T_i-1})\big)^l{h}_i(\alpha_i^{m-1})$.
This cause that the term ${\mathbf{h}}(\boldsymbol{\alpha}^{\boldsymbol{\nu}+\mathbf{n-1}})[{\boldsymbol{h}}(\boldsymbol{\alpha}^{\mathbf{n-1}})]^{-1}$
while $n_1+\nu_1-1>T_1$ or $n_2+\nu_2-1>T_2$ can be evaluated by the covariance and variance of the samples in the first scale interval.
Thus we have the following result.

\begin{prop}
Let $\{X(\alpha_1^{n_1},\alpha_2^{n_2}), (n_1,n_2)\in\mathbb{N}_0^2\}$
be a MMSI field with separable covariance function, Hurst parameter $\mathbf{H}=(H_1, H_2)$ and scale
$\boldsymbol{\Lambda}= (\alpha_1^{T_1}, \alpha_2^{T_2})$.
Then the covariance function of the field is characterized by the variance and covariance function of
samples in the first scale rectangle as shown by (\ref{r11}).
\end{prop}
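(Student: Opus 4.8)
The plan is to reduce the statement to the one-dimensional structure theorem for DSI Markov covariances (Theorem 3.2 in \cite{R-M1}) via separability, and then to observe that every quantity entering the resulting formula (\ref{r11}) is expressible through the variances and consecutive-lag covariances of the samples lying in the first scale rectangle.

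First I would use the separability hypothesis to write, for any $\mathbf{n}=(n_1,n_2)\in\mathbb{N}_0^2$ and $\boldsymbol{\tau}=(\tau_1,\tau_2)\in\mathbb{Z}^2$,
$$Q_{\mathbf{n}}^{\mathbf{H}}(\boldsymbol{\tau})=Q_{1,n_1}^{H_1}(\tau_1)\,Q_{2,n_2}^{H_2}(\tau_2),$$
where $X_1,X_2$ are the component DSI Markov processes with parameters $(H_i,\alpha_i^{T_i})$. Each $X_i$ being wide-sense Markov and DSI, Theorem 3.2 in \cite{R-M1} applies componentwise and, after taking the product, yields exactly the representation (\ref{r11}) for lags $\boldsymbol{\tau}=\mathbf{kT}+\boldsymbol{\nu}$ with $k_i\ge 0$, $0\le\nu_i<T_i$, together with the displayed reflection identity $Q_{\mathbf{n}}^{\mathbf{H}}(-\mathbf{kT}+\boldsymbol{\nu})=\boldsymbol{\alpha}^{-2\mathbf{kTH}}Q_{\mathbf{n}+\boldsymbol{\nu}}^{\mathbf{H}}(\mathbf{kT}-\boldsymbol{\nu})$ for negative lags. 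Since every $\boldsymbol{\tau}\in\mathbb{Z}^2$ has one of these two forms, it suffices to show that the right-hand side of (\ref{r11}) involves only first-scale-rectangle data.

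Next I would unwind each factor in (\ref{r11}). The base factor $[\mathbf{h}(\boldsymbol{\alpha}^{\mathbf{T-1}})]^{\mathbf{k}}$ is built from $h_i(\alpha_i^{T_i-1})=\prod_{j=0}^{T_i-1}\frac{Q_{i,j}^{H_i}(1)}{Q_{i,j}^{H_i}(0)}$, a product of ratios of the lag-one covariance to the variance over the indices $j=0,\dots,T_i-1$, i.e. over the samples in the first scale interval. For the mixed factor $\mathbf{h}(\boldsymbol{\alpha}^{\boldsymbol{\nu}+\mathbf{n-1}})[\mathbf{h}(\boldsymbol{\alpha}^{\mathbf{n-1}})]^{-1}$ I would invoke the quasi-periodicity $h_i(\alpha_i^{lT_i+m-1})=(h_i(\alpha_i^{T_i-1}))^l\,h_i(\alpha_i^{m-1})$, which follows from the period-$T_i$ invariance $\frac{Q_{i,j}^{H_i}(1)}{Q_{i,j}^{H_i}(0)}=\frac{Q_{i,j+T_i}^{H_i}(1)}{Q_{i,j+T_i}^{H_i}(0)}$; this collapses $h_i(\alpha_i^{\nu_i+n_i-1})/h_i(\alpha_i^{n_i-1})$ to an integer power of $h_i(\alpha_i^{T_i-1})$ times a finite product of the basic ratios $Q_{i,j}^{H_i}(1)/Q_{i,j}^{H_i}(0)$ with $0\le j\le T_i-1$. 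Finally, the scalar $Q_{\mathbf{n}}^{\mathbf{H}}(\mathbf 0)=Q_{1,n_1}^{H_1}(0)Q_{2,n_2}^{H_2}(0)$ is reduced through the DSI scaling of the variance, $Q_{i,m+T_i}^{H_i}(0)=\alpha_i^{2T_iH_i}Q_{i,m}^{H_i}(0)$: writing $n_i=k_iT_i+m_i$ with $0\le m_i<T_i$ gives $Q_{i,n_i}^{H_i}(0)=\alpha_i^{2k_iT_iH_i}Q_{i,m_i}^{H_i}(0)$, again first-scale-interval data. Assembling these observations shows $Q_{\mathbf{n}}^{\mathbf{H}}(\boldsymbol{\tau})$ is a closed expression in $\{Q_{i,j}^{H_i}(0),Q_{i,j}^{H_i}(1):i=1,2,\ j=0,\dots,T_i-1\}$ and the parameters $H_i,T_i,\alpha_i$, which is the assertion.

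The main obstacle I anticipate is purely the index bookkeeping: keeping the euclidean-division decompositions $n_i=k_iT_i+m_i$ and $\nu_i+n_i-1=l_iT_i+r_i$ mutually consistent, ensuring the basic ratios are well defined (all one-dimensional variances strictly positive), and deciding whether the lag-one covariance $Q_{i,T_i-1}^{H_i}(1)$, which involves the sample $\alpha_i^{T_i}$ on the right boundary of the first scale interval, is counted as first-scale-rectangle data (it naturally belongs to the closure) or instead eliminated through the wide-sense Markov chaining $Cov(X_i(\alpha_i^{a}),X_i(\alpha_i^{c}))=Cov(X_i(\alpha_i^{a}),X_i(\alpha_i^{b}))Cov(X_i(\alpha_i^{b}),X_i(\alpha_i^{c}))/\mathrm{Var}(X_i(\alpha_i^{b}))$ combined with the DSI scaling. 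Beyond this, the argument is a direct substitution into (\ref{r11}).
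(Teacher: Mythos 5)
Your proposal follows essentially the same route as the paper: factor the covariance by separability, apply Theorem 3.2 of \cite{R-M1} to each one-dimensional DSI Markov component to obtain (\ref{r11}), and then use the period-$T_i$ invariance of the ratios $Q_{i,j}^{H_i}(1)/Q_{i,j}^{H_i}(0)$ to collapse every $h_i$-factor onto first-scale-interval data. Your explicit reduction of $Q_{\mathbf{n}}^{\mathbf{H}}(\mathbf{0})$ via the scaling $Q_{i,m+T_i}^{H_i}(0)=\alpha_i^{2T_iH_i}Q_{i,m}^{H_i}(0)$ is a small but welcome detail that the paper leaves implicit.
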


\begin{rem}
The spectral density of MSI fields and sampled MMSI fields are characterized by the variance and covariance functions of the samples in
the first scale rectangle.
\end{rem}
\vspace{.4cm}

\noindent
{\bf \large  Multivariate self-similar Markov field:}\\\

\noindent
One of the main privileges of our method, which reveals by the proposed geometric sampling scheme is that in each dimension, any vertical and
horizontal strips on rectangles have DSI process that corresponds to multivariate self-similar process. Such correspondence traces its root back
to the work of Rozanov \cite{Rozanov} where the correspondence between PC process and the related multivariate stationary process are introduced.
As an example for the latter, the accumulated precipitation in an area in successive months can be considered as a PC process while the corresponding
multivariate stationary process can be considered as accumulated precipitation in successive Januaries and successive Februaries and so on,
that are stationary processes and have stationary cross correlations as well.
Now the MSI filed by the proposed geometric sampling method in two-dimensional case can provide some grid scale rectangles with $q_1q_2$ samples in
each rectangle, where $\lambda_1= \alpha_1^{q_1},\; \lambda_2= \alpha_2^{q_2}$.
Each point  as $X(i,j)$ in a scale rectangle  has corresponding  points as  $X(i+k_1{q_1}, j+k_2{q_2})$ in other scale rectangles   for all $k_1,k_2 \in\mathbb{N}$ that together provide a self-similar field corresponding to $X(i,j)$.
For $i=1, \ldots, q_1$ and $j=1, \ldots, q_2$ it provides an MSS field.\\
In one-dimensional case Modarresi et al. \cite{R-M1} explained the correspondence between DSI and multi-dimensional self-similar process, so by
the same manner, every MSI field corresponds to some multivariate MSS. Moreover, the MMSI field
$\{X(\alpha_1^{n_1},\alpha_2^{n_2}), (n_1,n_2)\in\mathbb{N}_0^2\}$ with scale  $\boldsymbol{\Lambda}= (\alpha_1^{T_1}, \alpha_2^{T_2})$,
corresponds to the $T_1T_2$-variate self-similar Markov field defined as $\big( Y_{0,0}(n_1,n_2),\ldots ,Y_{T_1-1, T_2-1}(n_1,n_2) \big)$ where
\begin{equation}
Y_{k_1,k_2}(n_1,n_2):=X(\alpha_1^{n_1T_1+k_1} , \alpha_2^{n_2T_2+k_2}),\label{eq140r}
\end{equation}

\noindent
$k_1=0,\ldots T_1 -1$, $k_2=0,\ldots T_2-1$. Hence
$$R^{\mathbf{H}}_{(k_1,k_2),(j_1,j_2)}((n_1,n_2),(m_1,m_2))=Cov(Y_{k_1,k_2}(n_1,n_2), Y_{j_1,j_2}(m_1,m_2))$$
{
is the cross covariance function that  can be written as

{\small
\begin{eqnarray}
R^{\mathbf{H}}_{(k_1,k_2),\!(j_1,j_2)}((n_1,n_2),(n_1\!+\!\tau_1,n_2\!+\!\tau_2))\!\!\!\!\!&=&\!\!\!\!\! \nonumber Cov(Y_{k_1,k_2}(n_1,n_2), Y_{j_1,j_2}(n_1+\tau_1,n_2+\tau_2))\\
\hspace{-.5cm}\!\!\!\!\!&=&\!\!\!\!\!
\nonumber
Cov[X(\alpha_1^{n_1T_1+k_1} , \alpha_2^{n_2T_2+k_2}),X(\alpha_1^{(n_1+\tau_1)T_1+j_1} , \alpha_2^{(n_2+\tau_2)T_2+j_2})]\\
\!\!\!\!\!&=&\!\!\!\!\!
\nonumber
{\alpha_1}^
{2n_1T_1H_1}{\alpha_2}^
{2n_2T_2H_2}Cov[X(\alpha_1^{\tau_1T_1+j_1} , \alpha_2^{\tau_2T_2+j_2}),X(\alpha_1^{k_1} , \alpha_2^{k_2})]\\
\!\!\!\!\!&=&\!\!\!\!\!
\nonumber
\boldsymbol{\alpha}^
{2\mathbf{nTH}}
{Q}_{\mathbf{k}}^{\mathbf{H}}(\boldsymbol{\tau}\mathbf{T}+\mathbf{j-k})\\
\!\!\!\!\!&=&\!\!\!\!\!
\boldsymbol{\alpha}^
{2\mathbf{nTH}}[{\mathbf{h}}
(\boldsymbol{\alpha}^{\mathbf{T-1}})]^{\boldsymbol{\tau}} {\mathbf{h}}(\boldsymbol{\alpha}^{\mathbf{j-1}})[{\boldsymbol{h}}(\boldsymbol{\alpha}^{\mathbf{k-1}})]^{-1}
{Q}_{\mathbf{k}}^{\mathbf{H}}(\mathbf{0}),
\end{eqnarray}
}}
where $\boldsymbol{\alpha}^
{2\mathbf{nTH}}={\alpha_1}^
{2n_1T_1H_1}{\alpha_2}^
{2n_2T_2H_2}$.
{
We remind that the validity of the third equality follows from Definition 4, the fourth equality from (3.1) and the last equality from (3.2).
}
So we have the following proposition.

\begin{prop}
Let $\big( Y_{0,0}(n_1,n_2),\ldots ,Y_{T_1-1, T_2-1}(n_1,n_2) \big)$, ${(n_1,n_2)\in \mathbb{N}_0^2}$ be the multivariate self-similar Markov
field defined by (3.3). Then its cross covariance function is characterized by (3.4).
\end{prop}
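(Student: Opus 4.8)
The plan is to reduce the cross covariance of the vector field to a single covariance of the underlying sampled MMSI field $X$ at geometric points and then to feed it into the Markov covariance identity $(\ref{r11})$. Fix block labels $k_1,j_1\in\{0,\dots,T_1-1\}$, $k_2,j_2\in\{0,\dots,T_2-1\}$ and block positions $(n_1,n_2),(m_1,m_2)\in\mathbb{N}_0^2$, and write $m_i=n_i+\tau_i$. Since the covariance is symmetric it suffices to treat $\tau_1,\tau_2\ge 0$ and to reduce the remaining cases to this one at the end. Substituting $(\ref{eq140r})$ turns $R^{\mathbf{H}}_{(k_1,k_2),(j_1,j_2)}((n_1,n_2),(m_1,m_2))$ into
\[
\mathrm{Cov}\big(X(\alpha_1^{n_1T_1+k_1},\alpha_2^{n_2T_2+k_2}),\,X(\alpha_1^{(n_1+\tau_1)T_1+j_1},\alpha_2^{(n_2+\tau_2)T_2+j_2})\big).
\]

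Next I would pull out the common dilation $\big((\alpha_1^{T_1})^{n_1},(\alpha_2^{T_2})^{n_2}\big)$ carried by both arguments. Because $X$ is scale invariant in each component with scales $\alpha_1^{T_1}$ and $\alpha_2^{T_2}$, iterating this property $n_1$ times in the first coordinate and $n_2$ times in the second and using separability gives
\[
R^{\mathbf{H}}_{(k_1,k_2),(j_1,j_2)}((n_1,n_2),(m_1,m_2))=\boldsymbol{\alpha}^{2\mathbf{nTH}}\,\mathrm{Cov}\big(X(\alpha_1^{k_1},\alpha_2^{k_2}),\,X(\alpha_1^{\tau_1T_1+j_1},\alpha_2^{\tau_2T_2+j_2})\big)=\boldsymbol{\alpha}^{2\mathbf{nTH}}\,Q^{\mathbf{H}}_{\mathbf{k}}(\boldsymbol{\tau}\mathbf{T}+\mathbf{j-k}),
\]
which is the first equality in (4.3). (Alternatively one may skip this reduction and apply $(\ref{r11})$ directly at base index $\mathbf{nT}+\mathbf{k}$; there the factors $[\mathbf{h}(\boldsymbol{\alpha}^{\mathbf{T-1}})]^{\pm\mathbf{n}}$ cancel and $Q^{\mathbf{H}}_{\mathbf{nT}+\mathbf{k}}(\mathbf{0})=\boldsymbol{\alpha}^{2\mathbf{nTH}}Q^{\mathbf{H}}_{\mathbf{k}}(\mathbf{0})$, giving the same answer.) Then I would invoke $(\ref{r11})$ with its base index taken to be the block label $\mathbf{k}=(k_1,k_2)$, its multiplier of $\mathbf{T}$ taken to be $\boldsymbol{\tau}$, and its remainder taken to be $\mathbf{j}-\mathbf{k}$; since $(\mathbf{j}-\mathbf{k})+\mathbf{k}-\mathbf{1}=\mathbf{j-1}$, this produces $[\mathbf{h}(\boldsymbol{\alpha}^{\mathbf{T-1}})]^{\boldsymbol{\tau}}\,\mathbf{h}(\boldsymbol{\alpha}^{\mathbf{j-1}})\,[\mathbf{h}(\boldsymbol{\alpha}^{\mathbf{k-1}})]^{-1}\,Q^{\mathbf{H}}_{\mathbf{k}}(\mathbf{0})$, i.e.\ the second equality in (4.3). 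Finally, as $k_i,j_i\in\{0,\dots,T_i-1\}$, the quantities $\mathbf{h}(\boldsymbol{\alpha}^{\mathbf{j-1}})$, $\mathbf{h}(\boldsymbol{\alpha}^{\mathbf{k-1}})$ and $Q^{\mathbf{H}}_{\mathbf{k}}(\mathbf{0})$ involve only samples inside the first scale rectangle, and $\mathbf{h}(\boldsymbol{\alpha}^{\mathbf{T-1}})$ reduces to the same data via the periodicity $Q^{H_i}_{i,j}(1)/Q^{H_i}_{i,j}(0)=Q^{H_i}_{i,j+T_i}(1)/Q^{H_i}_{i,j+T_i}(0)$, which yields the claimed characterization.

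The one point requiring care — and the only genuine obstacle — is the index arithmetic across block boundaries. The identity $(\ref{r11})$, a restatement of Theorem 3.2 of \cite{R-M1}, is phrased for a nonnegative remainder $0\le\nu_i\le T_i-1$, whereas $\mathbf{j}-\mathbf{k}$ may have negative components and $\tau_i$ may be nonpositive. In a coordinate with $j_i<k_i$ I would rewrite $\tau_iT_i+j_i-k_i=(\tau_i-1)T_i+\big(T_i-(k_i-j_i)\big)$, whose remainder lies in $\{1,\dots,T_i-1\}$, and apply $(\ref{r11})$ with multiplier $\tau_i-1$; the periodicity identity above guarantees that the $\mathbf{h}$-factors still telescope down to $\mathbf{h}(\boldsymbol{\alpha}^{\mathbf{j-1}})[\mathbf{h}(\boldsymbol{\alpha}^{\mathbf{k-1}})]^{-1}$. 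A coordinate with $\tau_i\le 0$ is reduced to the handled case either by the symmetry of the covariance or by the companion identity $Q^{\mathbf{H}}_{\mathbf{n}}(-\mathbf{kT}+\boldsymbol{\nu})=\boldsymbol{\alpha}^{-2\mathbf{kTH}}Q^{\mathbf{H}}_{\mathbf{n}+\boldsymbol{\nu}}(\mathbf{kT}-\boldsymbol{\nu})$ recorded immediately after $(\ref{r11})$. Beyond this bookkeeping everything is direct substitution, so once the ranges are matched up the proof itself is short.
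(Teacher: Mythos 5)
Your proof follows the same route as the paper: the paper's entire argument is the displayed computation immediately preceding the proposition --- substitute (4.2), pull out $\boldsymbol{\alpha}^{2\mathbf{nTH}}$ by component-wise scale invariance to reach $Q^{\mathbf{H}}_{\mathbf{k}}(\boldsymbol{\tau}\mathbf{T}+\mathbf{j}-\mathbf{k})$, and then apply (4.1) with base index $\mathbf{k}$, multiplier $\boldsymbol{\tau}$ and remainder $\mathbf{j}-\mathbf{k}$. Your extra bookkeeping for the cases $j_i<k_i$ and $\tau_i\le 0$ handles index-range issues that the paper passes over in silence, and is a genuine (if minor) improvement in rigor.
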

\vspace{.4cm}

{\noindent
{\bf \large  Scale Markov Property:}\\\
\newcommand*\vtick{\textsc{\char13}}

\noindent
 The Markov property can follows by some sub-sequences of a sequence of random variables which itself has not Markov property.
 Examples of this can be described as the sub-sequences of some PC processes.  The accumulated precipitation on the same month of successive years in some specific place and also the  traffic volume of a high-way at some specific hour  of each working day are such examples, { see \cite{aiye},  \cite{cance}, \cite{tian}}. One may call such Markov property as periodic Markov property where subsequences are obtained at points $\big\{t+n\tau, n\in {\Bbb N} \big \}$  for any fixed $t$, where $\tau$ is the period of the main PC sequences.    In contrast we study the scale Markov property that the subsequences of  a DSI  sequence at points $\big \{\lambda^n t, n \in {\Bbb N} \big \}$ for any fixed $t$,  where $\lambda$ is the scale of DSI  sequence, have Markov property.  We  describe this as component-wise scale Markov property for two dimensional MSI  field  by   the  followings.
{

\begin{defn} \label{sta}
Let $\big \{X(t_1,t_2), (t_1,t_2) \in {\Bbb  R}^2_+\big \}$ be the two-dimensional MSI field with scales  $(\lambda_1, \lambda_2)$  introduced by Definition 4.   This MSI field is said to have component-wise scale Markov property 
at points $\big \{(\lambda_1^{n_1}t_1, \lambda_2^{n_2}t_2 ), n_1, n_2\in{\Bbb N^0}\big\}$ for fixed $(t_1,t_2)\in {\Bbb R}^2_+,$  where ${\Bbb N}^0$ is the set of non-negative integers,
if the self-similar processes   
 $\big \{X_1(\lambda_1^{n_1}):=X(\lambda_1^{n_1}t_1, t_2), n_1 \! \in {\Bbb N^0}\big \}$ and $\big \{ X_2(\lambda_2^{n_2}):=X(t_1, \lambda_2{^{n_2}} t_2), n_2 \in {\Bbb N}^0\big \}$    have Markov property. 
\end{defn}

\noindent 
Let $X(t_1, t_2)$ be ths sfBs introduced by Definition 8. By the following remarks,  we present some characterization method for such scale Markov property.  

\begin{rem}
For fixed $(t_1,t_2)\in {\Bbb R}^+$, let  $\big \{X_i(\lambda_i^{n_i}), n_i \in \mathbb{N}\big \};$  $i=1,2$ be the self-similar processes defined  in   Definition 9.  Following the  method of Modarresi and Rezakhah \cite{ R-M2, R-M5}, we  assume that the subsidiary self-similar processes $\big \{ X_i^*(n_i)=X_i(\lambda_i^{n_i}), n_i \in \mathbb{N}\big \};$  $i=1,2$,  follow some  AR(1) models  as 
$$X_i^*(n_i)=\theta_i  X_i^*(n_i-1)+Z_i^*(n_i), \;\;\;\;\;\;\;\;\; Z^*_i(n_i)= \lambda_i^{n_iH_i}Z_i(n_i);   \;\;\;\;\;i=1,2,$$
where    $\big \{Z_i(n_i), n_i\in {\Bbb N} \big \}$'s   are Gaussian    white noises {\color{black} and $\theta_i$'s  are constants. 
 Thus  
their quasi Lamperti  ${\cal L}_{H_i, \lambda_i}$  counter parts $Y_i(n_i)= \lambda_i^{-n_iH_i}X_i(\lambda_i ^{n_i})$ are  stationary processes and  follow AR(1) models 
$Y_i(n_i)=\lambda_i^{-H_i}\theta_i Y_i(n_i-1) +Z_i(n_i),\;  i=1,2$.}
\end{rem}

\begin{rem}
Let  $\big\{Y_1(1), Y_1(2), \ldots ,Y_1(N_1)\big\}$ and $\big \{ Y_2(1), Y_2(2),\ldots ,Y_2(N_2)\big\}$ be samples of the stationary processes $Y_1(\cdot)$ and $Y_2(\cdot)$ introduced in Remark 4.  Also assume that  $R_{Y_1}(k)$ and $R_{Y_2}(k)$ denote the  corresponding sample auto-correlation functions at lag $k$. 
By  the results of  \cite[Section~8.2]{Broc}  and \cite[Section~4.2]{Chat}, the stationary processes $Y_1(\cdot)$ and $Y_2(\cdot)$  are accepted to follow AR(1) models   if the corresponding  lag 2 sample partial autocorrelation functions  $\alpha_{Y_i} (2)=(R_{Y_i}(2)-R_{Y_i}^2(1))/(1-R_{Y_i}^2(1))$
lie in the interval $\big(-1.96/(\sqrt{N_i}), 1.96/(\sqrt{N_i})\big)$ for $i=1,2$ 
respectively.   So they have Markov property and by Remark 4  the 
processes $\big \{ X_1^*(n_1), n_1 \in{\Bbb N}\big \}$ and
$\big \{X_2^*(n_2), n_2 \in {\Bbb N} \big \}$  follow AR(1) models and have Markov property as well. Thus  by Definition 9 {\color{black}  the  sfBs $X(t_1,t_2)$ has component-wise  scale Markov property at points $\big \{(\lambda_1^{n_1}t_1, \lambda_2^{n_2} t_2 ), n_1\in{\Bbb N^0}\big\}$ for fixed $(t_1,t_2)\in {\Bbb R}^2_+$.}
\end{rem}

\begin{rem}
{\color{black}Using  Remark 4, one   can easily verify that the autocorrelation functions of the processes  $\big \{ Y_1(n_1), n_1 =1,2,\ldots ,N_1 \big \}$  and  $\big \{Y_2(n_2), n_2 =1,2,\ldots ,N_2 \big \}$ at all lags are equal to the autocorrelation functions of the processes  $\big \{ X_1^*(n_1)=X_1(\lambda_1^{n_1}), n_1 =1,2,\ldots ,N_1\big \}$ and
$\big \{X_2^*(n_2)=X_2(\lambda_2^{n_2}), n_2=1,2,\ldots ,N_2 \big \}$ at the same lag respectively, say  $ R_{Y_i}(k)=R_{X_i^*}(k) $\, for all $k \in {\Bbb N}$ .} {\color{black} This equality is valid by the fact that  for $i=1,2, \;$  $Y_i(n_i)=\lambda_i^{-n_iH_i}X_i^*(n_i)$ and 
 $Y_i(n_i+k)=\lambda_i^{-(n_i+k)H_i}X_i^*(n_i+k)$. So
$\mbox{Var}\, (Y_i(n_i))= \lambda_i^{-2n_iH_i}\, \mbox{Var}\,  (X_i^*(n_i))$ and $\mbox{Var} \, (Y_i(n_i+k))= \lambda_i^{-2(n_i+k)H_i}\, Var (X_i^*(n_i+k))$ 
and $\mbox{Cov} \, (  Y_i(n_i) , Y_i(n_i+k)   )= \lambda_i^{-(2n_i+k)H_i} \,  \mbox{  Cov} \,(X^*_i(n_i), X_i^*(n_i+k)).$
 Thus $R_{Y_i}(k)=\mbox{Corr}\, (Y_i(n), Y_i(n+k)=\mbox{Corr} \, (X_i^*(n), X_i^*(n+k))=R_{X_i^*}(k)$. 
Therefore   the sfBs $X(t_1, t_2)$ is accepted to have 
  the component-wise scale Markov property  if   the lag 2 sample partial autocorrelations  $\; \alpha_{Y_i} (2)=(R_{X_i^*}(2)-R_{X^*_i}^2(1))/(1-R_{X^*_i}^2(1))$ lie in the intervals $\big(-1.96/(\sqrt{N_i}), 1.96/(\sqrt{N_i})\big)$ for $i=1,2$ respectively.}
\end{rem}}

{

\section{Real Data modeling}\setcounter{equation}{0}
In this section we study the precipitation data on a region of Brisbane  area in Australia for two days (25 and 26 January 2013).
This study follows by the evaluation of such precipitation on squares with side length 2km of a grids  over a 512 km $\times$ 512 km in this region.
Thus the precipitation values are considered as a $256\times256$ matrix.
The precipitation of rainfall in the area is depicted in Figure 1.
 The region was affected by extreme rainfall and subsequent flood.
This  rainfall  data is provided by the Australian bureau of meteorology \cite{Australia}.
	
\subsection{Estimation of Scale parameters of MSI field}
Here we describe the estimation method for fitting simple MSI filed associated with some scale rectangles as a discretized approximation of MSI field which has component-wise scale invariant property  and component-wise self similarity.
Then the estimation of scale  parameters  along the horizontal, vertical and time coordinates are followed for the precipitation on certain area in the described  region are followed.
This circumscription is specified by yellow rectangle in Figure 1.
The selected part, plotted in Figure 2, is a 120 km $\times$ 100 km area which correspondence to a $60\times50$ square areas with sides of  length 2km.
The MSI property of these precipitation is justified by detecting the scale invariant behavior for the accumulated precipitation on vertical and horizontal strips, and the precipitation for the whole are in successive 30 minutes of time on 25th and 26th January 2013.'

\begin{figure}[!tbp]
  \centering
  \begin{minipage}[b]{0.47\textwidth}
    \includegraphics[width=\textwidth]{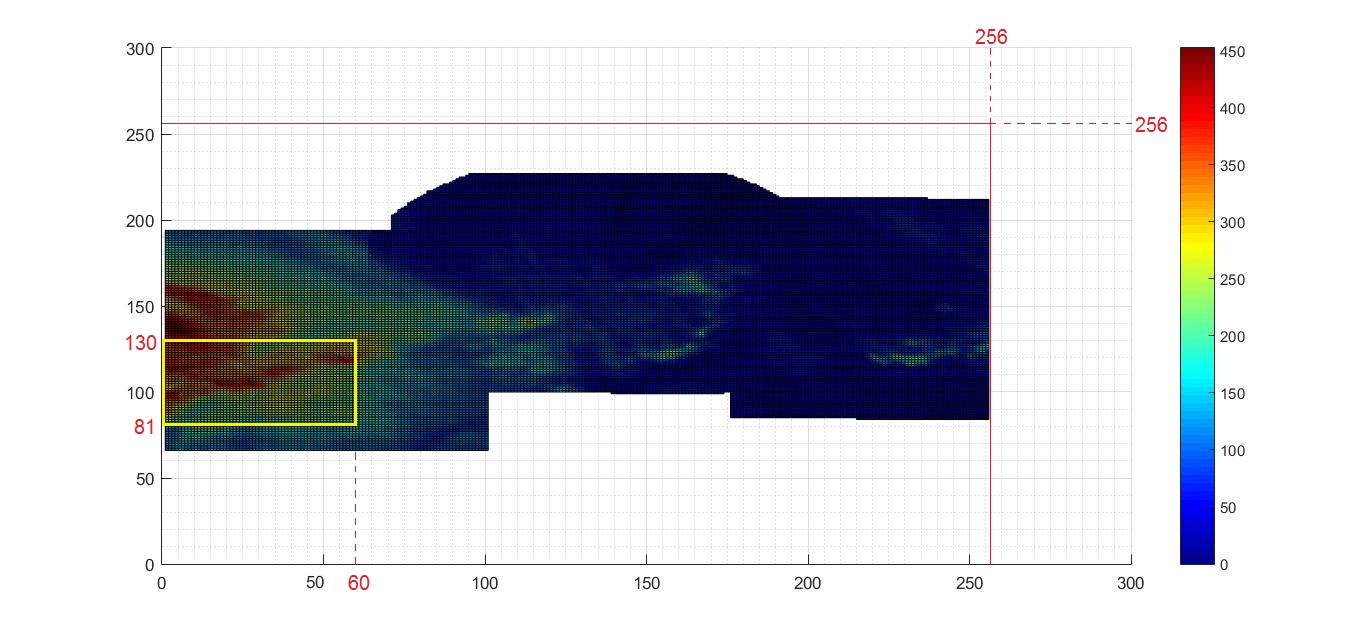}
    \caption{\footnotesize{Two-dimensional image of precipitation data (data unit is mm) over a 512 km $\times$ 512 km region  in the Brisbane
    area for two days (25th and  26th  January 2013) and the  selected area that specified by yellow rectangle.}}
  \end{minipage}
  \hfill
\vspace{-2.2in}

\hspace{3in}  \begin{minipage}[b]{0.44\textwidth}
    \includegraphics[width=\textwidth]{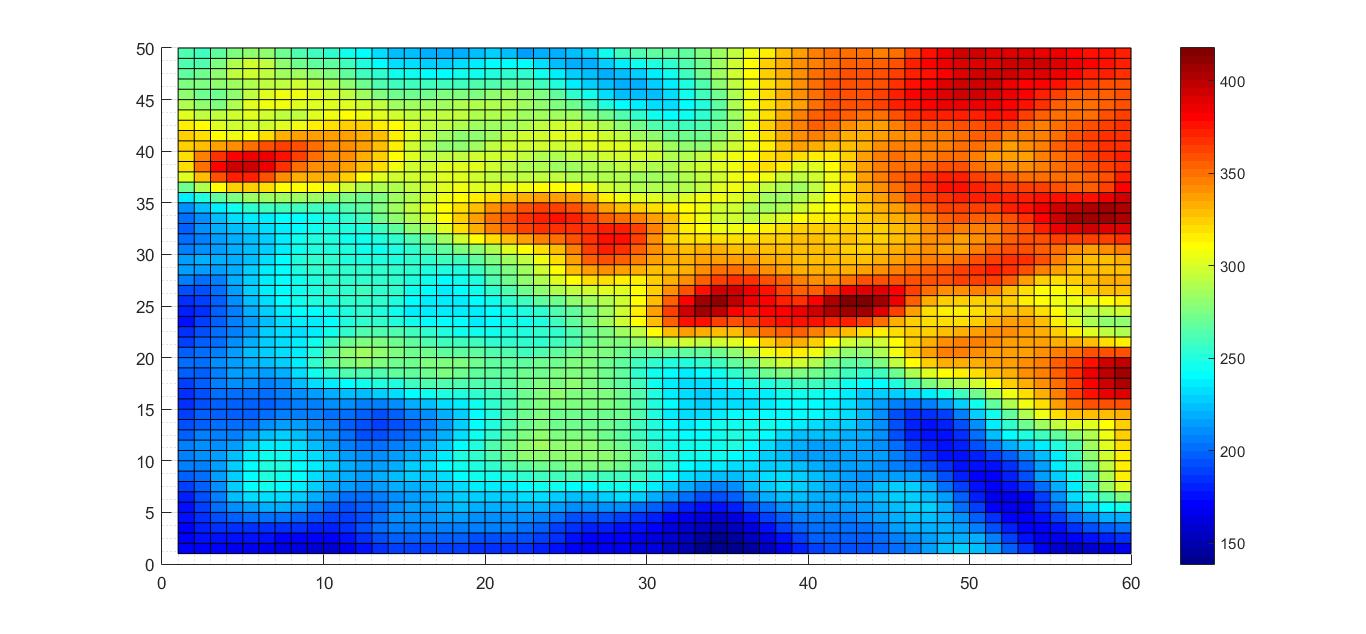}
    \caption{\footnotesize{Two-dimensional image of precipitation data for the selected area from Figure 1.}}\vspace{.4in}

  \end{minipage}
\end{figure}

\noindent Let $X_{ij}$ be the value of precipitation on $ij$-th square with vertices  in $(i,j),(i,j-1),(i-1,j),(i-1,j-1)$, where  $i=1,\ldots, 60$ and $j=1,\ldots, 50$. Also let  $X_{i.}=\sum_{k=1}^{50}X_{ik}$ denotes the accumulated precipitation
 on $i$-th vertical strip and $X_{.j}=\sum_{l=1}^{60}X_{lj}$  the accumulated precipitation  on $j$-th horizontal strip, where $i=1,\ldots, 60$ and $j=1,\ldots, 50$.
Table 1 and Table 2 show such amounts on successive vertical and horizontal strips respectively.  By plotting  accumulated precipitations in Figures 3 and 4,   the corresponding  scale intervals   are detected by fitting some proper parabolas that  their end points are shown with vertical red lines. These plots high-lights two characteristic  features of DSI processes as the ratio of the length of successive scale intervals are nearly the same which is called scale parameter and having somehow similar dilation in successive scale  intervals.
 In Figure 3 this method detects  three successive scale intervals for the accumulated precipitation on vertical strips  with  end points   $a_1=0$, $a_2=14$, $a_3=31$, $ a_4=52$, and in Figure 4 it detects three successive scale intervals  for the accumulated  precipitation on horizontal strips  with  end points $b_1=0$, $b_2=10$, $b_3=23$, $b_4=40$.
Following the  estimation method for scale parameter in \cite{R-M3}}, we evaluate the corresponding time varying scale parameters by $\lambda_{1,n-1}=\frac{a_{n+1}-a_n}{a_n-a_{n-1}}$ and $\lambda_{2,n-1}=\frac{b_{n+1}-b_n}{b_n-b_{n-1}}$ for $n=2,3$. This leads to find $\Lambda_1:=(\lambda_{1,1},\lambda_{1,2})=(1.214, 1.235)$ as the values of scale parameter for DSI process of accumulated precipitation on the vertical strips and $\Lambda_2:=(\lambda_{2,1},\lambda_{2,2})=(1.3, 1.307)$ on the horizontal strips.
{So we estimate  horizontal and vertical scale parameters as
\begin{equation}\label{s1}
{\hat\lambda_1}=\frac{{\lambda}_{1,1}+{\lambda}_{1,2}}{2}=\frac{1.214+1.235}{2}=1.224,\:\:\:\:\:\:\:\:\:
{\hat\lambda_2}=\frac{{\lambda}_{2,1}+{\lambda}_{2,2}}{2}=\frac{1.3+1.307}{2}=1.303.
\end{equation}}
\begin{table}[h!]
\begin{center}

\small
 \begin{tabular}{||l l l l l l l l l l||}

\hline
&	&	&	&	$X_{i.}$ & 	&	&	&	&	
\\
\hline
11169&	11448&	11812&	12174&	12454&	12620&	12673&	12673&	12663&	12636

\\
\hline
12590&	12545&	12516&	12499&	12511&	12567&	12692&	12855&	13013&	13201

\\
\hline
13406&	13561&	13646&	13694&	13750&	13802&	13813&	13754&	13613&	13460

\\
\hline
13382&	13409&	13532&	13696&	13896&	14138&	14377&	14560&	14672&	14730

\\
\hline

14765&	14788&	14820&	14876&	14956&	15051&	15152&	15235&	15299&	15365

\\
\hline

15433&	15496&	15572&	15687&	15850&	16044&	16276&	16536&	16787&	17009

\\
\hline
\end{tabular}
\caption{Sum of precipitation data on vertical strips as millimeters}
\label{table:1}
\end{center}
\end{table}

\begin{table}[h!]
\begin{center}
\small
\begin{tabular}{||l l l l l l l l l l||}
\hline
&	&	&	&	$X_{.j}$ &	&	&	&	&	
\\
\hline
10593&
10964&
11379&
11862&
12443&
13051&
13578&
13927&
14105&
14187
\\
\hline

14171&
14131&
14184&
14395&
14835&
15431&
16064&
16638&
17039&
17179
\\
\hline

17161&
17257&
17546&
17927&
18188&
18246&
18303&
18371&
18387&
18451
\\
\hline

18728&
19068&
19215&
19214&
19152&
19194&
19419&
19634&
19668&
19491
\\
\hline

19246&
19017&
18882&
18772&
18575&
18340&
18184&
18045&
17809&
17553
\\
\hline
\end{tabular}
\caption{ Sum of precipitation data on  horizontal strips as millimeters}
\label{table:2}
\end{center}
\end{table}

\begin{figure}[!tbp]
  \centering
  \begin{minipage}[b]{0.49\textwidth}
    \includegraphics[width=\textwidth]{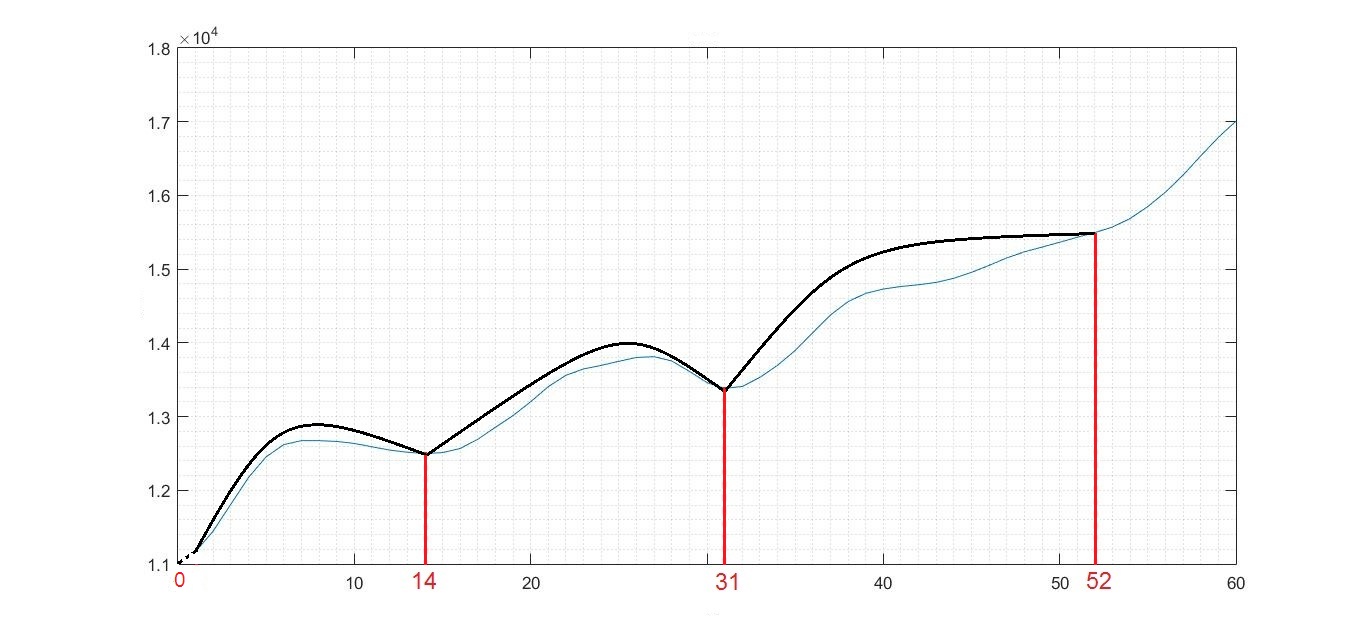}
    \caption{\footnotesize{Fitted black curves for  precipitation data on vertical  strips  and revealing the corresponding scale intervals by red lines.}}
  \end{minipage}
  \hfill
  \begin{minipage}[b]{0.5\textwidth}
    \includegraphics[width=\textwidth]{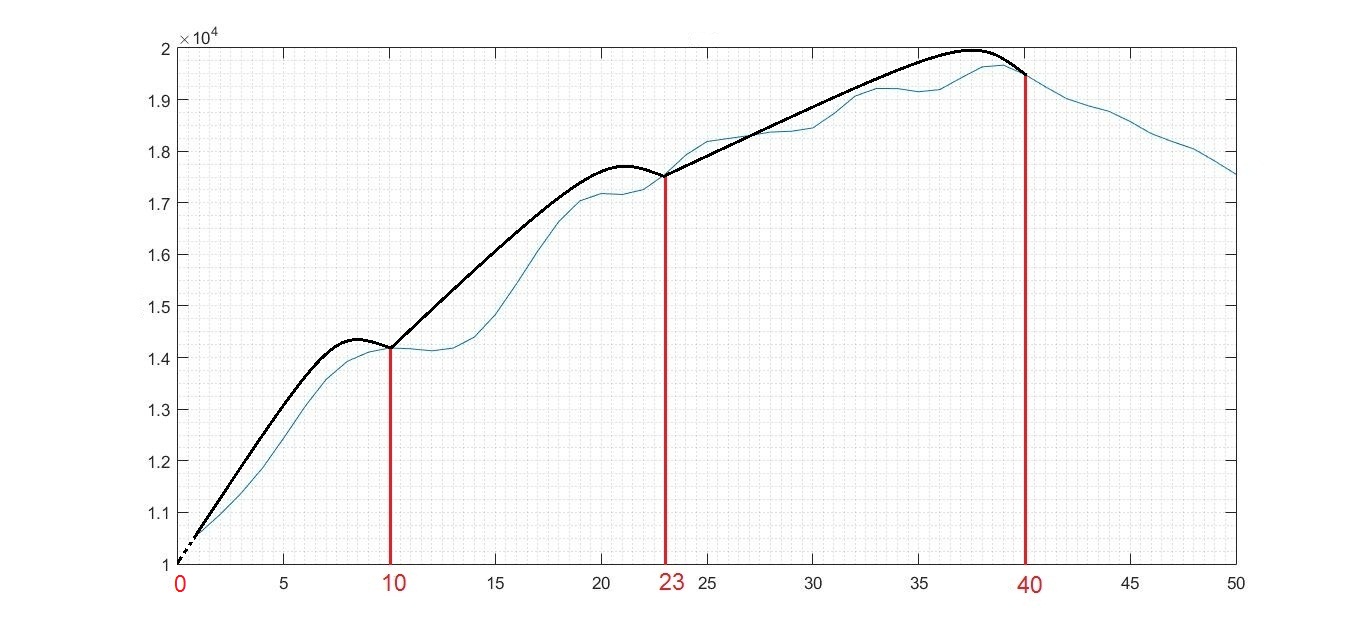}
    \caption{\footnotesize{Fitted black curves for  precipitation data on horizontal strips and revealing the corresponding scale intervals by red lines.}}
  \end{minipage}
\end{figure}
\noindent
{
For each horizontal and vertical scale interval we consider two equal subintervals which are indicated with green dashed lines in Figures 5, 6. For the horizontal scale intervals, we partition each subinterval into seven equally length interval in Figure 5, where the accumulated precipitation on the corresponding successive vertical strips of the $m$-th subinterval of the $n$-th horizontal scale interval  are denoted by $x_{(n,m)1}, \ldots , x_{(n,m)7}$.
So $x_{(n+1,m)k}\overset{\mathcal{L}}{=}\lambda_1^{\mathcal{H}_{(n,m)1}}x_{(n,m)k}$ where $\mathcal{H}_{(n,m)1}$ is corresponding  Hurst parameter
in between $n$-th and $(n+1)$-th  horizontal  scale intervals of sfBs (2.5)
 related to the $m$-th subintervals.
Also each subinterval in vertical scale intervals is divided into five equally length interval in Figure 6, where the accumulated precipitation on the corresponding successive horizontal strips of the $m$-th subinterval of the $n$-th vertical scale interval  are denoted by $y_{(n,m)1}, \ldots , y_{(n,m)5}$. So  $y_{(n+1,m)l}\overset{\mathcal{L}}{=}\lambda_2^{\mathcal{H}_{(n,m)2}}y_{(n,m)l}$ where $\mathcal{H}_{(n,m)2}$ is corresponding  Hurst parameter
 in between $n$-th and $(n+1)$-th vertical scale intervals of sfBs (2.5)
 related to the $m$-th subintervals
}

\subsection{Estimation of Hurst parameters}
 {
 For the estimation of the Hurst parameters
  of fitted sfBs (2.5)
  to  the precipitation area, using  the notations described at the end of subsection 4.1, we have that
   $$x_{(n+1,m)k}^2\overset{\mathcal{L}}{=}\lambda_1^{2\mathcal{H}_{(n,m)1}}x_{(n,m)k}^2
  $$
  for $k=1,\ldots , 7$ and
  $$y_{(n+1,m)l}^2\overset{\mathcal{L}}{=}\lambda_2^{2\mathcal{H}_{(n,m)2}}y_{(n,m)l}^2
  $$
  for $l=1,\ldots , 5$.
 Now we propose to estimate $\lambda_{i}^{\mathcal{H}_{(n,m)i}}$  by the ratio of the corresponding quadratic means for $i=1,2$.
Hence,
\begin{equation*}
\hat{\mathcal{H}}_{(n,m)i}=\frac{\log(SS_{(n+1,m),i}/SS_{(n,m),i})}{2\log\lambda_{i,m}},
\end{equation*}
where
\begin{equation}
SS_{(n,m),1}=\frac{1}{7}\sum_{k=1}^{7}x^2_{(n,m)k},
\:\:\:\:\:\:\:\:SS_{(n,m),2}=\frac{1}{5}\sum_{k=1}^{5}y^2_{(n,m)k}.
\end{equation}
}
\noindent
 {\color{black}
 {The values of $x_{(n,m)k}$  and $y_{(n,m)l}$ } are presented in Tables 3 and 4 by $k$ orders.
\begin{table}[h!]
\begin{center}
\small
\begin{tabular}{||c c c c c||}
\hline
n \!\!\!\!\!\!&\vline & \!\!\! \!\!\!$x_{(n,1)k}$   \!\!\!\!\!\! &\vline & \!\!\!\!\!\!$x_{(n,2)k}$  \\ [0.5ex]
 \hline\hline
1 \!\!\!\!\!\!&\vline &\scriptsize \!\!\!\!\!\!
11169 -	 11448 -	 11812 - 	12174 -	 12454 	- 12620 	-  12673

\!\!\!\!\!\!&\vline&\!\!\!\!\!\! \scriptsize 12673 -	12663 - 	12636	- 12590 -	12545 -12516 -	12499
\\
 \hline
2\!\!\! \!\!\!&\vline&\scriptsize \!\!\!\!\!\!15200 -	15310 -	 15513 -	15754  -	 16014 -	16319 	- 16519
\!\!\!\!\!\! &\vline&\!\!\!\!\!\!\scriptsize 16601 -	16676 -	 16753 - 	16748 	- 16617 -16406 -	 16289
 \\
 \hline
3\!\!\! \!\!\!&\vline& \!\!\!\!\!\! \scriptsize 20175 - 20462 -	20965 - 	21446 - 	21896 -	22066 - 	22159
 \!\!\!\!\!\!&\vline&\!\!\!\!\!\!\scriptsize   22214  -	22354 -	22529 -	22770  - 	22917 -	23082 -	 23213
\\
 \hline
\end{tabular}
\caption{precipitation values on partitions in subintervals  along vertical strips.}
\label{table:3}
\end{center}
\end{table}

\begin{table}[h!]
\begin{center}
\small
\begin{tabular}{||c c c c c||}
\hline
n &\vline &  $y_{(n,1)l}$    &\vline & $y_{(n,2)l}$  \\ [0.5ex]
 \hline\hline
1 &\vline &\scriptsize
10593 -
10964 -
11379 -
11862 -
12443

 &\vline& \scriptsize
13051 -
13578 -
13927 -
14105 -
14187

\\
 \hline
2 &\vline&\scriptsize
18410 -
18402 -
18629 -
19361 -
20377

 &\vline&\scriptsize
21342 -
22085 -
22326 -
22377 -
22723

 \\
 \hline
3 &\vline&\scriptsize
30659 -
31024 -
31192 -
31309 -
31952

 &\vline&\scriptsize
32592 -
32608 -
32761 -
33302 -
33259

\\
 \hline
\end{tabular}
\caption{ precipitation values on partitions in subintervals along horizontal strips.}
\label{table:4}
\end{center}
\end{table}
The ratios of the mean squares and estimated  Hurst values  are shown in Tables 5 and 6.
\begin{table}[h!]
\begin{center}
\small
 \begin{tabular}{||c c c c c c  ||}
 \hline
 n  &\vline & $\frac{SS_{(n+1,1),1}}{SS_{(n,1),1}}$& $\hat{\mathcal{H}}_{(n,1)1}$&  $\frac{SS_{(n+1,2),1}}{SS_{(n,2),1}}$& $\hat{\mathcal{H}}_{(n,2)1}$\\ [0.5ex]
 \hline\hline
 1 &\vline & 1.718&1.40& 1.736 &1.42\\
 \hline
2 &\vline& 1.819& 1.42  &1.878& 1.49 \\
\hline

\end{tabular}
\caption{ The ratios of squared of quadratic means and  corresponding  Hurst values  along vertical  strips.}
\label{table:5}
\end{center}
\end{table}

\begin{table}[h!]
\begin{center}
\small
 \begin{tabular}{||c c c c c c ||}
 \hline
 n  &\vline & $\frac{SS_{(n+1,1),2}}{SS_{(n,1),2}}$& $\hat{\mathcal{H}}_{(n,1)2}$&  $\frac{SS_{(n+1,2),2}}{SS_{(n,2),2}}$& $\hat{\mathcal{H}}_{(n,2)2}$\\ [0.5ex]
 \hline\hline
 1 &\vline & 2.76& 1.93& 2.60& 1.81\\
 \hline
 2 &\vline&  2.69 &1.85& 2.20 & 1.47  \\
 \hline

\end{tabular}
\caption{ The ratios of squared of quadratic means and corresponding Hurst values along horizontal strips.}
\label{table:6}
\end{center}
\end{table}
}

Hence,
\begin{figure}[!tbp]
  \centering
  \begin{minipage}[b]{0.485\textwidth}
    \includegraphics[width=\textwidth]{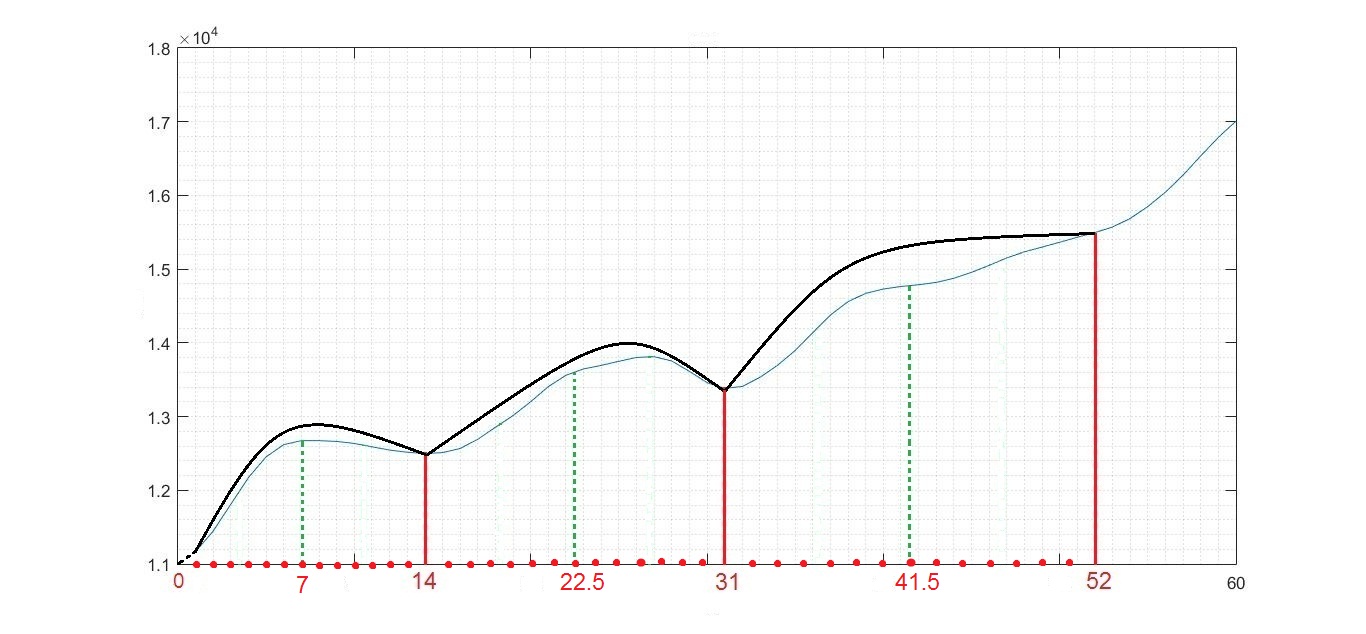}
    \caption{\footnotesize{Each horizontal scale interval is divided into two equally length  subintervals by green dashed lines.}}
  \end{minipage} 
  \hfill \hfill 
  \begin{minipage}[b]{0.495\textwidth}
    \includegraphics[width=\textwidth]{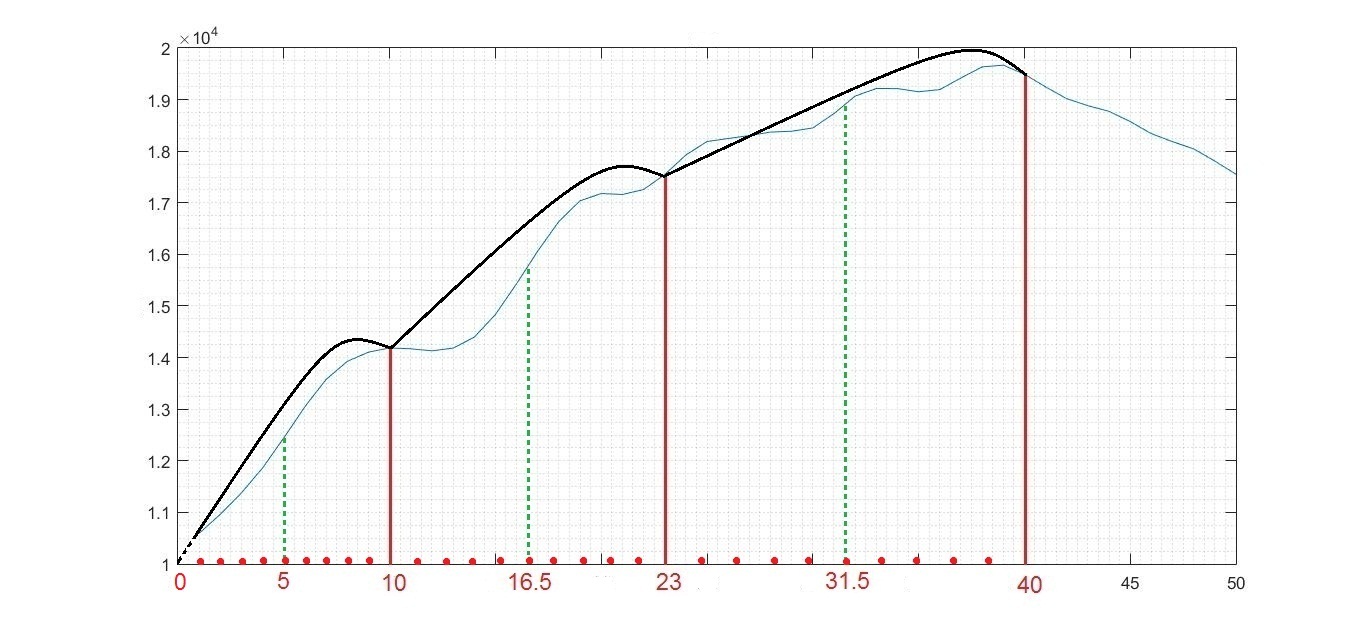}
    \caption{\footnotesize{Each vertical scale interval is divided into two equally length  subintervals by green dashed lines..}}
  \end{minipage}
\end{figure}

$$\hat{{\bf H}}_{1}:=(\hat{\mathcal{H}}_{(1,1)1},\hat{\mathcal{H}}_{(1,2)1}, \hat{\mathcal{H}}_{(2,1)1},\hat{\mathcal{H}}_{(2,2)1})=(1.40, 1.42, 1.42, 1.49)$$
and
$$\hat{{\bf H}}_{2}:=(\hat{\mathcal{H}}_{(1,1)2},\hat{\mathcal{H}}_{(1,2)2}, \hat{\mathcal{H}}_{(2,1)2},\hat{\mathcal{H}}_{(2,2)2})=(1.93, 1.81, 1.85, 1.47).$$
{
Denoting the  Hurst parameter
in between $n$-th and $(n+1)$-th  horizontal  scale intervals by $\hat{{{H}}}_{1,n}$ and
 for vertical scale intervals by $\hat{{{H}}}_{2,n}$, we have hat
}
\begin{equation*}
\hat{{{H}}}_{1,1}=\frac{\hat{\mathcal{H}}_{(1,1)1}+\hat{\mathcal{H}}_{(1,2)1}}{2}=1.41,\:\:\:\:\:\:
\hat{{{H}}}_{1,2}=\frac{\hat{\mathcal{H}}_{(2,1)1}+\hat{\mathcal{H}}_{(2,2)1}}{2}=1.46,
\end{equation*}
\begin{equation*}\hat{{{H}}}_{2,1}=\frac{\hat{\mathcal{H}}_{(1,1)2}+\hat{\mathcal{H}}_{(1,2)2}}{2}=1.87,\:\:\:\:\:\:\:
\hat{{{H}}}_{2,2}=\frac{\hat{\mathcal{H}}_{(2,1)2}+\hat{\mathcal{H}}_{(2,2)2}}{2}=1.66.
\end{equation*}
{
So we estimate  horizontal and vertical Hurst parameters
  of fitted sfBs (2.5)
 by
\begin{equation}\label{s2}
\small
\hat{{{H}}}_{1}=\frac{\hat{{{H}}}_{1,1}+\hat{{{H}}}_{1,2}}{2}=\frac{1.41+1.46}{2}=1.435\:\:\:\:\:\:\:
\hat{{{H}}}_{2}=\frac{\hat{{{H}}}_{2,1}+\hat{{{H}}}_{2,2}}{2}=\frac{1.87+1.66}{2}=1.765
\end{equation}
}
\noindent
Now we show the DSI behavior of the  precipitation with respect to time in the selected whole area which is shown in Figure 2.
For this, we consider the accumulated precipitation in this area for every 30 minutes on 25th and 26th January 2013 which are depicted in
{\color{black}
Table 7
}by their orders of time.

\begin{table}[h!]
\begin{center}
\small
\begin{tabular}{||l l l l l l l l l l||}

\hline
2311&	2568&	2702&	2802&	2351&	2248&	2496&	2552&	2061&	1780	
\\
\hline
1453&	1823&
2415&	2596&	2885&	2947	&2936	&3113&	2877&	2974	
\\
\hline
3511&	3954&	2339	&1525&
1527	&2071	&2907&	2545&	1965&	2310	

\\
\hline
3064&	2455&	1515&	1516&	1527&	848&
1837&	2110	&1064	&975	

\\
\hline
1084	&2163	&4603&	6409	&6157&	5246&	4998	&5610&
3880&	2292	
\\
\hline
4130&	5911&	8518&	8876&	9693&	11377&	10841&	12442&	14649	&16939

\\
\hline
15266	&15964	&15842	&16711	&17760	&17274	&15813	&14470	&13623	&13521	

\\
\hline
14326&	14867&
16331	&16070	&15880&	18072&	21312	&22709	&24026	&22112	

\\
\hline
21492&	21567	&16331	&22416&
23215	&23309&	21192&	17992&	17550&	15295	

\\
\hline
13795	&11567	&9122&	7488	&6425	&4745&&&&
\\
\hline

\end{tabular}
\caption{ Values of the successive 30 minute  precipitation data on the selected region in the Brisbane area for 25 and 26 January 2013 as
millimeters.}
\label{table:3}
\end{center}
\end{table}

\begin{figure}[!tbp]
  \centering
  \begin{minipage}[b]{0.485\textwidth}
    \includegraphics[width=\textwidth]{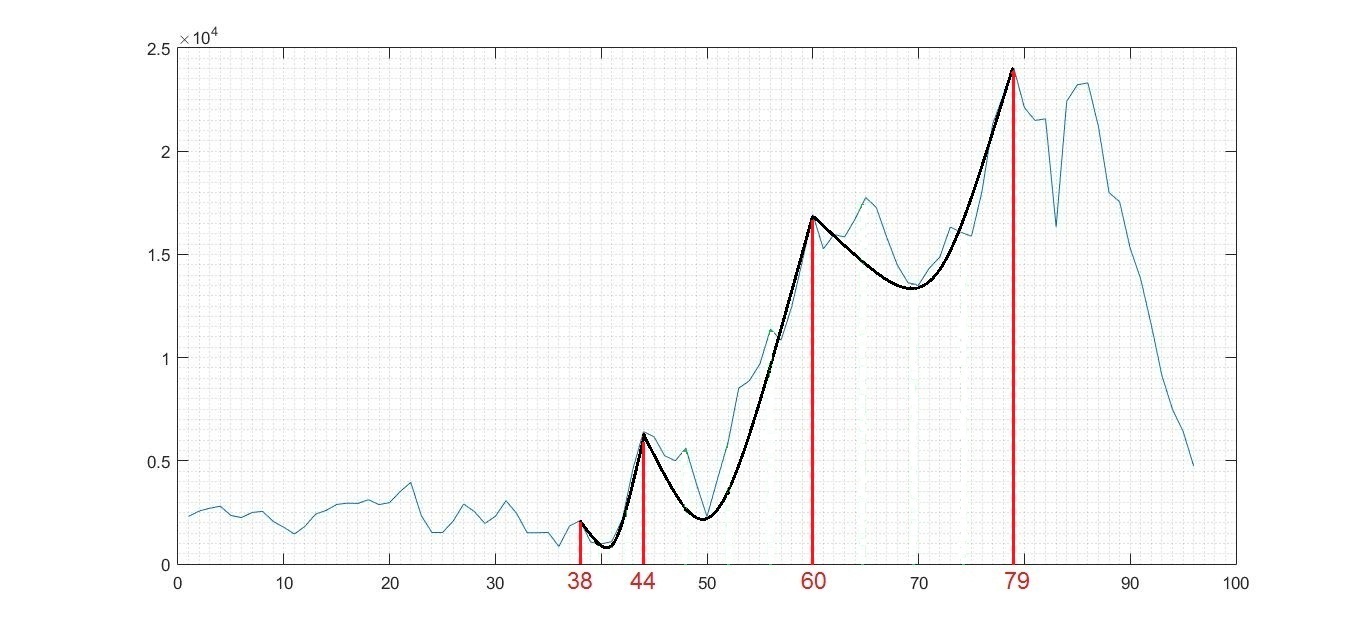}
    \caption{\footnotesize{The fitted black curve shows the scale intervals of 30 minutes precipitation on 25th and 26th January 2013. Scale intervals are shown by red lines.}}
  \end{minipage}
  \hfill
  \begin{minipage}[b]{0.49\textwidth}
    \includegraphics[width=\textwidth]{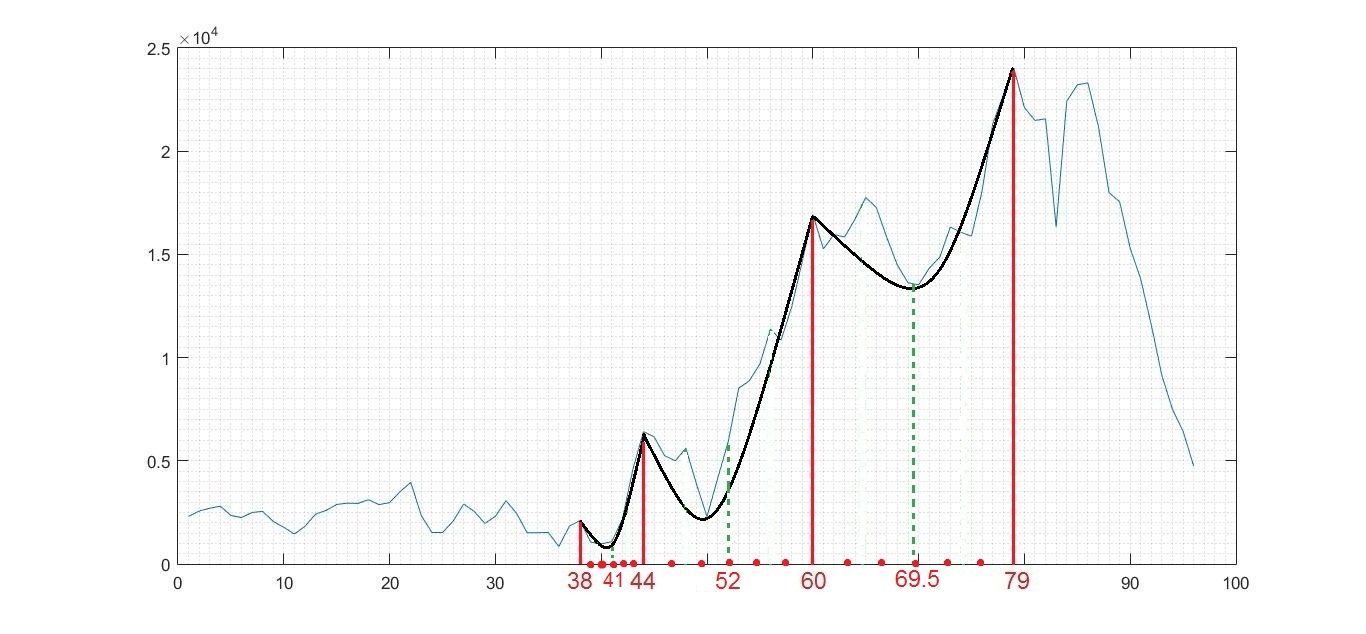}
    \caption{\footnotesize{Each scale interval, for  successive 30 minutes precipitation on 25-26 January 2013, is divided into two  equally length subintervals by green dashed lines.}}
  \end{minipage}
\end{figure}

\noindent
In Figure 7, precipitation per successive 30 minutes duration  on the two days are plotted by blue.
We fit some parabolas to the successive scale intervals in this figure and highlight the boundary of these scale intervals with  red lines. So we have three scale intervals where the end points of these scale intervals are $c_1=38$, $c_2=44$, $c_3=60$, $ c_4=79$.
Thus, the scale parameter can be evaluated by  $\lambda_{3,n-1}=\frac{c_{n+1}-c_n}{c_n-c_{n-1}}$ for $n=2,3$.
So, we have $\Lambda_3:=(\lambda_{3,1},\lambda_{3,2})=(2.667, 1.187)$.
We divide each scale interval into two equal subintervals which are indicated with green dashed lines in Figure 8 and partition these subinterval with the equally spaced red points.
{
Following the same method described at the beginning of the subsection 4.2, the  Hurst parameters related to the $m$-th subintervals of the $n$-th and $(n+1)$-th  scale intervals
  is estimated  by}
\begin{equation*}
{\mathcal{H}}_{(n,m)3}=\frac{\log(SS_{(n+1,m),3}/SS_{(n,m),3})}{2\log\lambda_{3,m}},
\end{equation*}
{ where
$SS_{(n,m),3}=\frac{1}{3}\sum_{k=1}^{3}z^2_{(n,m)k}$,
in which  $z_{(n,m)1}, z_{(n,m)2}, z_{(n,m)3}$  are  accumulated precipitation on successive partitions of $m$-th subinterval of the $n$-th  scale interval (presented in Table 8), which their ratio and corresponding Hurst estimates are shown in Table 9.

}
{\color{black}
\begin{table}[h!]
\begin{center}
\small
\begin{tabular}{||c c c c c||}
\hline
n &\vline &  $z_{(n,1)k}$    &\vline & $z_{(n,2)k}$  \\ [0.5ex]
 \hline\hline
1 &\vline &
1064	\:-\:975	\:-\:1084
 &\vline&
2163	\:-\:4603	\:-\:6409
\\
 \hline
2 &\vline&
14702\:-\:11946\:-\:11577
 &\vline&
23791	\:-\:29619	\:-\:39924
 \\
 \hline
3 &\vline&
49746	\:-\:54290	\:-\:45448
 &\vline&
46732	\:-\:52502	\:-\:71119
\\
 \hline
\end{tabular}
\caption{ precipitation values on partitions in subintervals.}
\label{table:8}
\end{center}
\end{table}

\begin{table}[h!]
\begin{center}
\small
 \begin{tabular}{||c c c c c c  ||}
 \hline
 n  &\vline & $\frac{SS_{(n+1,1),3}}{SS_{(n,1),3}}$& $\hat{\mathcal{H}}_{(n,1)3}$&  $\frac{SS_{(n+1,2),3}}{SS_{(n,2),3}}$& $\hat{\mathcal{H}}_{(n,2)3}$\\ [0.5ex]
 \hline\hline
 1 &\vline & 151.28&2.56& 45.37 &1.94\\
 \hline
 2 &\vline& 15.19& 7.93  &3.29& 3.48 \\
 \hline

\end{tabular}
\caption{ The ratios of squared of quadratic means and  corresponding  Hurst values.}
\label{table:9}
\end{center}
\end{table}

}

Thus
  $$\hat{{\bf H}}_{3}:=(\hat{\mathcal{H}}_{(1,1)3},\hat{\mathcal{H}}_{(1,2)3}, \hat{\mathcal{H}}_{(2,1)3},\hat{\mathcal{H}}_{(2,2)3})=(2.56, 1.94, 7.93, 3.48)$$
  ,
\begin{equation*}
\hat{{H}}_{3,1}=\frac{\hat{\mathcal{H}}_{(1,1)3}+\hat{\mathcal{H}}_{(1,2)3}}{2}=2.25, \:\:\:\:\:\:\hat{{H}}_{3,2}=\frac{\hat{\mathcal{H}}_{(2,1)3}+\hat{\mathcal{H}}_{(2,2)3}}{2}=5.7
\vspace{5mm}
\end{equation*}
and
\begin{equation*}
\hat{{{H}}}_{3}=\frac{\hat{{{H}}}_{3,1}+\hat{{{H}}}_{3,2}}{2}=\frac{2.25+5.7}{2}=3.98.
\vspace{5mm}
\end{equation*}
\noindent
{\bf Simple fractional Brownian sheet : Parameter Estimation}\\
{
Here we are to justify the structure of simple fractional Brownian sheet (sfBs), defined by (\ref{TS}), as a MSI field for the precipitation data
 of Figure 2.
After detecting scale rectangles, the scale parameters $(\lambda_1, \lambda_2)$ are estimated by (\ref{s1}).
 By (\ref{s2}), we have that $\hat{{H}}_{1}=1.435$ and $\hat{{H}}_{2}=1.765$.
As we have
$m_1=14$ equally spaced samples in each horizontal scale intervals in Figure 5, and $m_2=10$ equally spaced samples in each vertical scale intervals in Figure 6. These samples provide some partitions for each interval. Let $x_{1(n,k)}$ be the sum of precipitation on the $k$-th vertical strips corresponding to the $k$-th partition  of the $n$-th horizontal scale interval where  $n=1,2,3$ and  $k=1,\ldots,14$. Now we are to estimate the Hurst parameters of the two-dimensional fractional Brownian sheet  $ H'_{1}$ and $ H'_{2}$ corresponding to sfBs (2.5). For this we consider quadratic variations of lengths two and one  for horizontal scale intervals as
$$
SS_{1(n,2)}=\frac{1}{[\frac{m_1}{2}]-1}\sum_{l=2}^{[\frac{m_1}{2}]}(x_{1(n,2l)}-x_{1(n,2l-2)})^2,
\:\:\:\:\:\:
SS_{1(n,1)}=\frac{1}{[\frac{m_1}{2}]-1}\sum_{l=2}^{[\frac{m_1}{2}]}(x_{1(n,l)}-x_{1(n,l-1)})^2
$$
and  for vertical scale intervals as
  $$
SS_{2(n,2)}=\frac{1}{[\frac{m_2}{2}]-1}\sum_{l=2}^{[\frac{m_2}{2}]}(x_{2(n,2l)}-x_{2(n,2l-2)})^2,
\:\:\:\:\:\:
SS_{2(n,1)}=\frac{1}{[\frac{m_2}{2}]-1}\sum_{l=2}^{[\frac{m_2}{2}]}(x_{2(n,l)}-x_{2(n,l-1)})^2,
$$
where $x_{2(n,k)}$ is the accumulated precipitation on the $k$-th horizontal strips corresponding to the $k$-th partition  of the $n$-th vertical scale interval; $n=1,2,3$ and $k=1,\ldots,10$.
\\
By component-wise self-similarity with Hurst $H_i^{'}$ and component-wise stationary increment property of the samples inside each scale rectangle of sfBs; this is by the fact that when we consider accumulate precipitation on the strips and denote them by $x_{i(n,l)}$ they construct a simple fractional Brownian motion which inside each scale rectangles samples are just some multiple of a fractional Brownian motion and so have stationary increments,  so we have that
\begin{equation}\label{oo}
(x_{i(n,2l)}-x_{i(n,2l-2)})\overset{\mathcal{L}}{=}2^{H_i^{'}}(x_{i(n,l)}-x_{i(n,l-1)})
\end{equation}
 for $i=1,2$, and by MSI behavior of sfBs (2.5) we have that $x_{i(n,2l)}\overset{\mathcal{L}}{=}\lambda_{i}^{H_i}x_{i(n-1,2l)}$. So
 $$SS_{i(1,2)}\overset{\mathcal{L}}{=}\frac{SS_{i(2,2)}}{{\lambda_i}^{2H_i}}\overset{\mathcal{L}}{=}\frac{SS_{i(3,2)}}{{\lambda_i}^{4H_i}}, \;\;\;\;\;\; SS_{i(1,1)}\overset{\mathcal{L}}{=}\frac{SS_{i(2,1)}}{{\lambda_i}^{2H_i}}\overset{\mathcal{L}}{=}\frac{SS_{i(3,1)}}{{\lambda_i}^{4H_i}}$$ and by
 (\ref{oo}), $SS_{i(1,2)}\overset{\mathcal{L}}{=}2^{2H_i^{'}}SS_{i(1,1)}$. Therefore by assuming
$$
U_{i2}=SS_{i(1,2)}+\frac{SS_{i(2,2)}}{{\lambda_i}^{2H_i}}+\frac{SS_{i(3,2)}}{{\lambda_i}^{4H_i}},
\:\:\:\:\:\:\:\:\:\:
V_{i1}=SS_{i(1,1)}+\frac{SS_{i(2,1)}}{{\lambda_i}^{2H_i}}+\frac{SS_{i(3,1)}}{{\lambda_i}^{4H_i}},\:\:\: i=1,2,
$$
we have that $U_{i2}\overset{\mathcal{L}}{=}2^{2 H_{1i}^{'}}V_{i1}$. Also we note that for $i=1,2$, $U_{i2}$ and $V_{i1}$ consist of $r_i([\frac{m_i}{2}]-1)$
increment samples where $r_1=r_2=3$ is the number of vertical or horizontal scale intervals and $m_1=14$ is the number of vertical strips in each horizontal scale interval and $m_2=10$ is the number of horizontal strips in each vertical scale interval.
So by similar method to the theorem 1 of Rezakhah et al. \cite{R-M3} where used the result of Ayache et al \cite{Ayache} for fBm, we have the estimation of the Hurst parameters of sfBs as
$$
\hat H_{1}^{'}=\frac{\log(\frac{\hat{U}_{12}}{\hat{V}_{11}})}{2\log2}=0.47,
\:\:\:\:\:\:\:\:\:\:\:\:
\hat H_{2}^{'}=\frac{\log(\frac{\hat U_{22}}{\hat V_{21}})}{2\log2}=0.91, 
$$
where
$$
\hat U_{i2}=SS_{i(1,2)}+\frac{SS_{i(2,2)}}{{\lambda_i}^{2\hat H_i}}+\frac{SS_{i(3,2)}}{{\lambda_i}^{4\hat H_i}},
\:\:\:\:\:\:\:\:\:\:
\hat V_{i1}=SS_{i(1,1)}+\frac{SS_{i(2,1)}}{{\lambda_i}^{2\hat H_i}}+\frac{SS_{i(3,1)}}{{\lambda_i}^{4\hat H_i}}.
$$
\noindent
{

\subsection{\bf Prediction }
For prediction we use component-wise (latitude and longitude) scale invariant property of sfBs model (2.5) assumed for the precipitation in Brisbane  area of Australia on the specified duration of time (25 and 26 January 2013)  depicted in Figure 2 to predict the values of
precipitation on different region ($A_{ij}$) based on the values of the precipitation on the first region ($A_{11}$) in Figure 9.
Here we study prediction of the precipitation in surface which are depicted in  Figures 9 and 10.
For this, first we denote the 9 scale rectangles as  $A_{11},\ldots, A_{33}$  in Figure 9,  which are in correspondence  to the vertical and horizontal scale intervals presented in Figures 3 and 4. We partition each scale rectangle $A_{ij}$ into four sub-rectangles $A_{ij,k}$; $k=1,2,3,4$  in Figure 10 that are related to the sub-intervals in Figures 5 and 6. {The  scale sub-rectangles are shown in Figure 10  with  black lines along vertical axis at points $d_1=0$, $d_2=7$, $d_3=14$, $d_4=22.5$, $d_5=31$, $d_6=41.5$, $d_7=52$ and along horizontal axis at points  $e_1=0$, $e_2=5$, $e_3=10$, $e_4=16.5$, $e_5=23$, $e_6=31.5$, $e_7=40 $.  
Let $(s_1, t_1), (s_2, t_1),(s_1,t_2) $ and $(s_2, t_2)$ denote the center points of the sub-rectangles $A_{11,1},  A_{11,2}, A_{11,3}, A_{11,4}$.  So for $i,j=1,2,3$  we denote the accumulated precipitation  on sub-rectangles $A_{ij,1}, A_{ij,2},A_{ij,3}$ and $A_{ij,4},$ which are recorded in Table 10,  by  $\,X(\lambda_1^{i-1}s_1, \lambda_2^{j-1}t_1),\; X(\lambda_1^{i-1}s_2, \lambda_2^{j-1}t_1),\; X(\lambda_1^{i-1}s_1, \lambda_2^{j-1}t_2)\;$ and $X(\lambda_1^{i-1}s_2, \lambda_2^{j-1}t_2),$   related to their center points, respectively.
 Reffering to the Figure 10, we assume that the correlation of the accumulated precipitations on the  corresponding sub-rectangles in the first and second vertical scale strips, say $(A_{1j,k}, A_{2j,k})$, $\,j=1,2,3$ and $k=1,2,3,4$  to be the same. Also the correlation of the accumulated  precipitations on corresponding  sub-rectangles of the second and third vertical scale strips $(A_{2j,k}, A_{3j,k})$ to be the same. Finally the accumulated precipitation on corresponding sub-rectangles of the first and third vertical scale strips  $(A_{1j,k}, A_{3j,k})$ have the same correlation.  So using the notations of Remark 6 we estimate $R_{X_1^*}(2)$  by calculating the sample correlation of  the accumulated precipitations on  the pair of sub-rectangles $(A_{1j,k}, A_{3j,k})$ as  $0.9046$.    The    $R_{X^*_1}(1)$  can be estimated as the correlation of  the accumulated precipitation on the  pair of sub-rectangles $A_{1j,k}, A_{2j,k}$  or on the pair of sub-rectangles $A_{2j,k}, A_{3j,k}$ {which are corresponding sub-rectangles in the first and second horizontal scale strips and in the second and third horizontal scale strips respectively and} are evaluated as $-0.9573$ and $-0.9544$.
}

\begin{figure}[!tbp]
  \centering
  \begin{minipage}[b]{0.49\textwidth}
    \includegraphics[width=\textwidth]{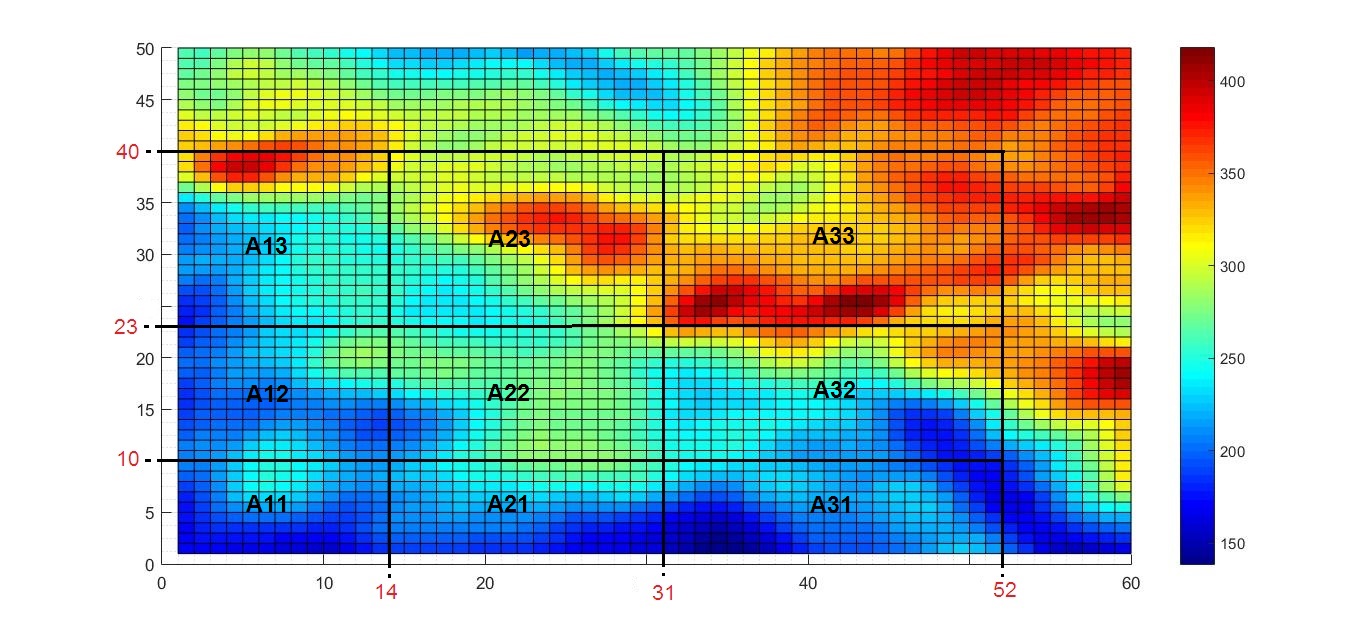}
    \caption{\tiny{Nine rectangle areas, which are obtained by crossing lines along end points of scale intervals.}}
  \end{minipage}
  \hfill
  \begin{minipage}[b]{0.5\textwidth}
    \includegraphics[width=\textwidth]{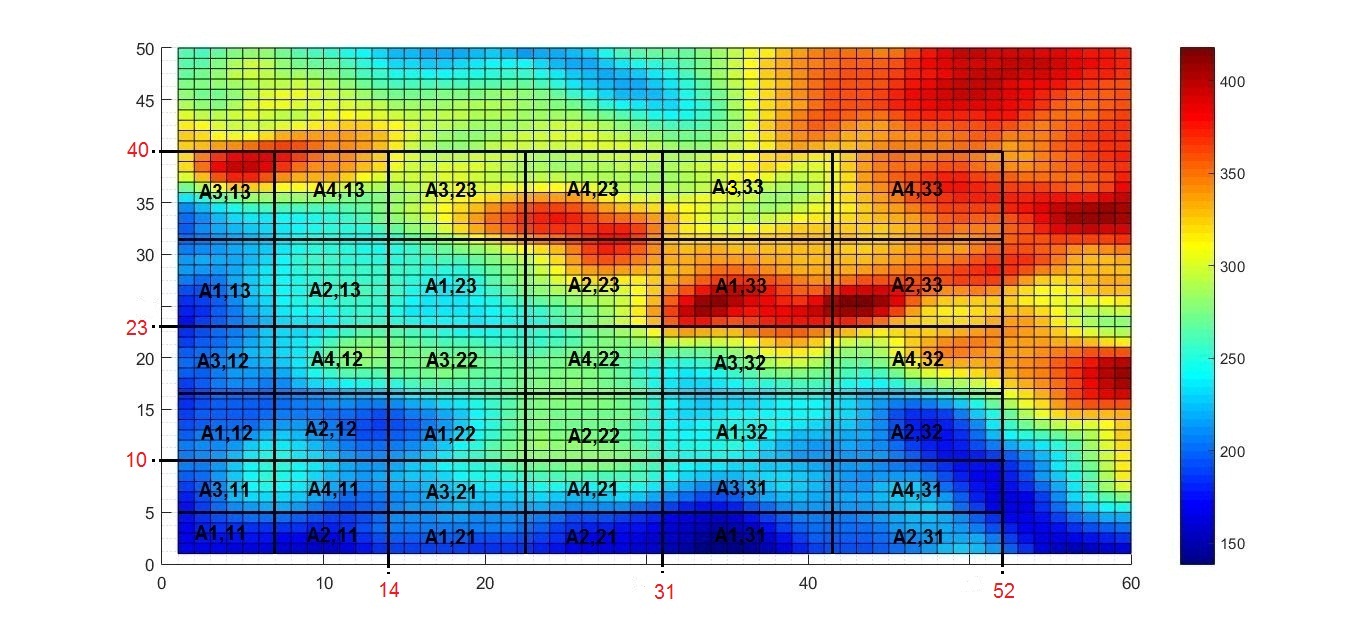}
    \caption{ \tiny{sub-rectangle areas, which are obtained by crossing lines along subinterval points of scale intervals.}}
  \end{minipage}
\end{figure}

\begin{table}[h!]
\begin{center}
\small
 \begin{tabular}{||c c c c c c c c c c c||}
 \hline
sub-rectangular area &\vline &$ A_{11,1}$ & $A_{11,2}$ & $A_{11,3}$ &  $A_{11,4}$ & $A_{12,1}$& $A_{12,2}$& $A_{12,3}$& $A_{12,4}$ & $A_{13,1}$\\ [0.5ex]
 \hline
precipitation value  &\vline&  6451 & 6590 & 7816 & 7701 & 9314 & 9577&9572&11686&12983\\
 \hline
  \hline
sub-rectangular area &\vline & $A_{13,2}$ & $A_{13,3}$ &
$ A_{13,4}$ &  $A_{21,1}$ & $A_{21,2}$& $A_{21,3}$& $A_{21,4}$& $A_{22,1}$ & $A_{22,2}$\\
 \hline
precipitation value &\vline&  14915 & 18062 & 17882 & 8631 & 7761 & 10216 &10733&13567&14716\\
 \hline
  \hline
sub-rectangular area &\vline & $A_{22,3}$ & $A_{22,4}$ & $A_{23,1}$ & $ A_{23,2}$ & $A_{23,3}$& $A_{23,4}$& $A_{31,1}$& $A_{31,2}$ & $A_{31,3}$\\
 \hline
precipitation value &\vline&  14370 & 15161 & 19372 & 23591 & 22272 & 22617 &9300&10910&11715\\
 \hline
  \hline
 sub-rectangular area &\vline & $A_{31,4}$ & $A_{32,1}$ & $A_{32,2}$ &  $A_{32,3}$ & $A_{32,4}$& $A_{33,1}$& $A_{33,2}$& $A_{33,3}$ & $A_{33,4}$\\
 \hline
precipitation value &\vline&  10428 & 16210 & 15111 & 20707 & 21592 & 31390 &31123&27169&31183\\
 \hline
\end{tabular}
\caption{ The accumulated precipitation value on the sub-rectangular regions.}
\label{table:4}
\end{center}
\end{table}

{
Now following Remarks 5 and 6 we show that the precipitations has component-wise scale Markov property. 
Following notations of  Remark 6, and the estimated values of $R_{X_1^*}(2)$ and $R_{X_1^*}(1)$  we estimate    $\alpha_{Y_1}(2)$  as the  partial
auto-correlation function of $Y_1(\cdot )$  at lag 2  as $-0.1416$ and $-0.0707$ based on the two estimated values of
 $R_{X_1^*}(1)$ . As number of samples is 12 and  both of these values are between $\pm 1.96/\sqrt{12}=\pm 0.5658$. Therefore, at level $\%95$  the first component  scale Markov property  for the precipitation is accepted.  By the same method we evaluate the sample partial auto-correlation $\alpha_{Y_2}(2)$    of $Y_2$ at lag 2 is evaluated by estimating  $R_{X_2^*}(2)$ and  $R_{X_2^*}(1)$
which are estimated by  using  the accumulated precipitation on the pair of sub-rectangles  
$(A_{i1,k},A_{i3,k})$  and on pair of sub-rectangles  $(A_{i1,k},A_{i2,k})$ or $(A_{i2,k},A_{i3,k})$   for $i=1,2,3$ and $k=1,2,3,4$ which  cause the sample partial auto-correlation at lag 2, say $\alpha_{Y_2}(2)$,  to be evaluated as $0.3242$ or  $0.2624$  which both are between $\pm 1.96/\sqrt{12}=\pm 0.5658$. So  at level $\%95$ it is accepted that $Y_2$ follows an AR(1) model.
Thus by Remark 5 and Remark 6 of Section 3 the sfBs $X(t_1,t_2)$ of the accumulated precipitation on these sub-rectangles  has component-wise scale Markov property.}
{ For the simplicity let $Y_{ij,k}$  denote the accumulated precipitation on sub-rectangle $A_{ij,k}$  for $ i,j = 1,2,3 $ and $k = 1,2,3,4$.
   So, the accumulated precipitation $Y_{ij,k}$  have component-wise Markov property in components $i$ and $j$ for fixed $k$.
Also following   MSI property, ${Y}_{ij,k}\overset{\mathcal{L}}{=}\lambda_1^{(i-1){H}_1}\lambda_2^{(j-1){H}_2} Y_{11,k}$  for  $i,j=1,2,3$ and $k=1,2,3,4$ where  $Y_{ij,k}$ as the accumulated precipitation on sub-rectangles  $A_{ij,k}$ are shown in Figure 10 and} { their values are recorded in Table 10.}
Therefore  under the assumption that the precipitation  $Y_{11,k}$ is known, $Y_{ij,k}$
 can be predicted using conditional expectation as
\begin{equation}\label{r12}
\hat{Y}_{ij,k|{11,k}}=\hat {E}[{Y}_{ij,k}|Y_{11,k}]=\hat\lambda_1^{(i-1){\hat H}_1}\hat\lambda_2^{(j-1){\hat H}_2} Y_{11,k}
\end{equation}
for $i,j=1, 2, 3$ and $k=1,2,3,4$. 
Let  $ Y_{ij}=\sum_{i=k}^4Y_ {ij,k}$  be the accumulated precipitation  on the scale rectangular  $A_{ij}$ for  $i,j = 1,2,3$  in Figure 9.
 So the prediction of $Y_{ij}$  provided the precipitation $Y_{11,k}; k=1,2,3,4$ is known can be evaluated as $\sum_{k=1}^4 \hat{Y}_{ij,k|{11,k}}$ (Table 11). The real precipitation on $A_{ij}$ ($Y_{ij}$) and corresponding predicted value are plotted in Figure 11.

\begin{figure}[!tbp]
  \centering
  \begin{minipage}[b]{0.49\textwidth}
    \includegraphics[width=\textwidth]{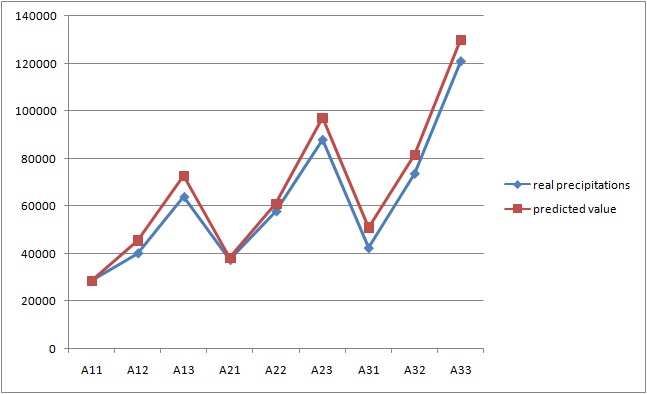}
     \end{minipage}
    \caption{The accumulated precipitation value  and corresponding predicted value  on  rectangular regions.}
\end{figure}
\begin{table}[h!]
\begin{center}
\small
 \begin{tabular}{||c c c c c c c c c c c||}
 \hline
rectangular area \!\!\!\!&\vline & $A_{11}$ & $A_{12}$ & $A_{13}$ &  $A_{21}$ & $A_{22}$& $A_{23}$& $A_{31}$& $A_{32}$ & $A_{33}$\\ [0.5ex]
 \hline
precipitation values \!\!\!\! &\vline&  28558 & 40149 & 63842 & 37341 & 57814 & 87852&42353&73620&120865\\
 \hline
predicted values  &\vline&  28558 & 45562 & 72691 & 38168 & 60894 & 97151&51011&81384&129842\\
  \hline
\end{tabular}
\caption{ The accumulated precipitation value and corresponding predicted value on  rectangular regions.  }
\label{table:5}
\end{center}
\end{table}
The mean absolute percentage error (MAPE)  is a famous measure for the prediction accuracy, \cite{Bowerman, Hanke, Moren}, is defined by
$$
\gamma=\frac{1}{n}\sum_{i=1}^n
|\frac{r_i-\hat{r}_i}{r_i}|\times 100,
$$
where $r_i$ and $\hat{r}_i$ are respectively real value and predicted value for $i$-th data point and $n$ is  the number of data points.
According to interpretation of MAPE values by Lewis\cite{Lewis},
for highly accurate forecasting  $\gamma\leq10$,
 good forecasting  $10<\gamma\leq20$,
 reasonable forecasting  $20<\gamma\leq50$
and for inaccurate forecasting  $\gamma>50$, see Moreno et  al.  \cite{Moren}.
\\
 So we consider MAPE as an accuracy index for the  predicted values
 on eight rectangular areas in Figure 9.
The  MAPE value  for these data is evaluated as
\begin{equation}\label{g12}
\gamma^*=\frac{1}{8}\sum_{k=1}^3\sum_{l=1}^3
|\frac{{Y}_{kl}-\hat{Y}_{kl}}{Y_{kl}}|\times 100,
\end{equation}
where $\hat{Y}_{11}$ is equal to $Y_{11}$ and is ignored in calculations, because the rectangular region $A11$ is the initial region.  Table 12 shows the absolute values in (\ref{g12}).
 \begin{table}[h!]
\begin{center}
\small
 \begin{tabular}{||c c c c c c c c c c c||}
 \hline
rectangular area &\vline & $A_{11}$ & $A_{12}$ &$ A_{13}$ &  $A_{21}$ & $A_{22}$& $A_{23}$& $A_{31}$&$ A_{32}$ & $A_{33}$\\ [0.5ex]
 \hline
absolute value  &\vline&  0 & 0.135 & 0.139 & 0.022 & 0.053 & 0.106&0.204&0.105&0.074\\
 \hline
\end{tabular}
\caption{ Absolute of difference between  the actual   precipitation values   and the corresponding predicted values
 are divided by the actual precipitation values.  }
\label{table:12}
\end{center}
\end{table}
These  values  are absolute of difference between  the actual  accumulated precipitation values on nine rectangular regions  and the corresponding predicted values
 divided by the actual accumulated precipitation values.
 The MAPE value is obtained as $\gamma^*=10.5$.
\\
Hence, by Lewis's classification for MAPE values, this is a verified certificate for the accuracy of our prediction method   of precipitation values.

One could follow the method of this section to predict the precipitation in time while we have the same circumstances (as the DSI behavior always valid for restricted duration) by having the precipitation in some scale interval of time. Prediction can be followed in surface and time simultaneously by applying these predictions successively.
}
{
\section{Discussion $\&$ Conclusions}
In this paper, we have introduced multi-scale invariant (MSI) fields which have  component-wise discrete scale invariant property.
It is shown that the covariance function of the MSI field with Markov property (MMSI)  is characterized by the covariance functions and variances of samples  on the first scale rectangle. A two-dimensional simple fractional Brownian sheet (sfBs) as an example of MSI field is demonstrated and applied to a set of real data, say precipitation in Brisbane area of Australia and considered prediction that its high accuracy is shown by using MAPE method.\\
Regarding Markov property of MSI field, even though assuming Markov property for the precipitation on different parts of some area is not so realistic but this can be a property in some other MSI fields.  Results of section 3 enables one to evaluate the covariance structure between samples of any two scale rectangles of Markov MSI field (3.4) by using the covariance structure and variance functions of samples inside first scale interval.

}
\noindent
\\\\
{\Large{\bf Acknowledgment }}\\
{
We would like to express our sincere thanks to the two anonymous reviewers that their valuable comments helped us to improve the quality of this manuscript.
}
 The authors would like to express their thanks to  Professor Alan Seed from Australian Bureau of Meteorology for providing the data used in this paper.

\end{document}